\newtheorem{thm}{Theorem}
\newtheorem{lemma}{Lemma}
\newtheorem{proposition}{Proposition}
\newtheorem{corol}{Corollary}
\newtheorem{remark}{Remark}
\newtheorem{defn}{Definition}
\begin{document}

\title{\LARGE
Stochastic Geometry based Medium Access Games in Mobile Ad hoc Networks
}
\author{Manjesh Kumar Hanawal, Eitan Altman and Francois Baccelli
\thanks{Part of this work is accepted for presentation at Infocom 2012.}
\thanks{ Manjesh Kumar Hanawal and Eitan Altman are with INRIA and are located at the University of Avignon. Francois Baccelli  is with INRIA (Unite de Recherche de Rocquencourt) and ENS (Department d'Informatique).}

}

\date{}
\maketitle
\begin{abstract}
This paper studies the performance of Mobile Ad hoc Networks (MANETs) when
the nodes, that form a Poisson point process, selfishly choose their Medium Access Probability (MAP). We consider goodput and delay as the performance metric that each node is interested in optimizing taking into account the transmission energy costs. We introduce a pricing scheme based on the transmission energy requirements and compute the symmetric Nash equilibria of the game in closed form. It is shown that by appropriately pricing the nodes, the selfish behavior of the nodes can be used to achieve the social optimum at equilibrium.  The Price of Anarchy is then analyzed for these games. For the game with delay based utility, we bound the price of anarchy and study the effect of the price factor. For the game with goodput based utility, it is shown that price of anarchy is infinite at the price factor that achieves the global optima.
\end{abstract}

\begin{keywords}
Game Theory; Mobile Ad hoc Networks (MANETs); Pricing; Medium Access Control; Stochastic Geometry; Replicator Dynamics
\end{keywords}

\section{INTRODUCTION}
In this paper we study competition for network resources at
the medium access control (MAC) layer. It is well known that
computing the Nash equilibria in games is in general a hard problem.
Indeed, this problem falls into a class of problems introduced by
Christos Papadimitriou in 1994, called PPAD (Polynomial Parity
Arguments on Directed graphs). In view of this complexity, it
becomes attractive to identify classes of games for which
one can compute the equilibria at a low complexity.
We thus study a MAC game under some statistical assumptions
on the location of the nodes, which are on one hand reasonable in
many real scenarios, and on the other hand, allow for tractable and
in several cases, even explicit expressions for the Nash equilibria.

We consider slotted time, and assume that the transmitters are synchronized.
The basic assumptions on our model are
\begin{itemize}
\item The location of the transmitters at each time slot
forms a homogeneous Poisson point process (P.p.p);
\item Medium access is controlled using Aloha;
\item Transmission success is based on signal to interference and noise ratio (SINR) being larger than some threshold;
\item We assume saturated sources, i.e., every mobile has always a packet to send.
\end{itemize}

The Poisson assumption means that
\begin{itemize}
\item The number of mobiles in disjoint sets
are independent;
\item The number of mobiles in any given set follows a Poisson distribution. This class of point processes maximizes entropy. It is often used for modeling the location of users in e.g. mobile ad hoc networks.
\end{itemize}

We first consider the problem in which
each node is a player: it chooses a MAP so as to maximize its own performance metric.
We consider utility functions that model the tradeoff between
quality of service indicators (such as goodput expected delay) and power consumption related to the transmission. Our Aloha assumption on the MAC protocol leads to explicit expressions for the performance metrics of interest, which allow us to derive several interesting properties of the network. We leave the study of other, less tractable, CSMA type of MAC protocols for future study.

Our goals are to obtain symmetric Nash equilibria (SNE), possibly mixed, and to study
their properties \footnote{We shall compute the SNE but do not address the question of how equilibrium is reached}. By restricting to symmetric games, and by restricting to choices of MAP that are not functions of the locations of players or interferers, we have to consider the utility evaluated only at points where the strategies of all other players are the same, say $p$. We search for a $p$ such that if all use it then a player who deviates unilaterally will not improve its utility. Thus utility of a player can be viewed  as the utility of a player in a much simpler two person games, and any equilibria in the two player game is a symmetric equilibria in the original game.

We are interested in particular in the case
where the power consumption
disutility represents a pricing decision of the network that wishes
to determine a pricing that will induce an efficient equilibrium
(in terms of the achieved goodput).
Alternatively, the pricing may be taken such as to
maximize the network's revenues.

Our main findings are:
\begin{enumerate}
\item Considering the goodput as the quality of service to be maximized, we observe the
tragedy of the commons \cite{TragedyOfCommons}: the utility at equilibrium
is zero for large values of the pricing parameter.
Thus the price of anarchy is infinite. We show however that
there exists a pricing parameter for which the goodput
at equilibrium equals the one obtained under global
cooperative throughput maximization. We show that when each node uses the replicator dynamics to update its
MAP, the system converges to the SNE.
\item  Considering the delay as the quality of service to be minimized,
we observe that the price of anarchy is bounded for any price value.
Here too, there exists a pricing that results in  global optimum at equilibrium. We show that the SNE is not unique. The range of price parameters
for which two SNE exist is characterized.
\end{enumerate}

As the price of anarchy is infinite for certain price factors, it may seem therefore that high prices have a negative effect on the network performance. However if we consider instead the network performance or the monetary profit of the operator, we discover that pricing can induce an equilibrium for which these measures coincide with the global optimal values.

The Poisson assumption on the location of nodes allows us
to obtain utilities in a surprisingly simple explicit form,
which in turn allows us to obtain much insight on the
property of the equilibria and on the role of the pricing.
Such Poisson assumptions are often justified. For instance, in \cite{AndrewBacelliGantiAllerton2008_NewTractableApproach}, in the context of cellular networks, the authors obtain explicit expressions for the coverage and throughput with the Poisson assumption on base station locations. Their numerical investigation show that these expressions closely capture the real behavior of the network.

\noindent
\textbf{Related Work:}
There are several papers that model the nodes in the Aloha system as selfish users. In \cite{MacKenzieWickerInfocom2003_ALOHAGame} a game theoretic model is proposed to analyze the performance of the Aloha network. It is assumed that when a collision  occurs, some of the transmissions can be successful. The authors in \cite{HazerStephenIEEETON_AnalysisOfNEInRAG} study the performance of slotted Aloha with $n$ nodes that are selfish. Any collision results in the loss of the packet and there is a cost associated with unsuccessful transmissions. They characterize the equilibrium point as a function of $n$ and study its limiting behavior.
In \cite{AltmanElazouziTaniaWiOpt2003_ALOHAGameWithInformation}
slotted Aloha is considered with $n$ selfish users and the distributed choice of the retransmission probabilities is analyzed. It is shown numerically, that by adding a retransmission cost, the throughput at equilibrium equals the optimal team throughput.
Pricing is used in the context of the power control games in \cite{SaraMandaGoodIEEEJSAC_PricingPowerControl} to improve the performance at equilibrium.

All these paper do not take the geometric aspects of the node location into account , which is important in the wireless context. We consider in this paper the geometric aspects of the MANET and analyze the network performance at equilibrium. Also, unlike in all other papers, where simultaneous transmissions lead to transmission failure, we assume that the probability of successful transmission depends on SINR at the receiver. For games that take geometric aspects into consideration see \cite{AltmanKmarSinghIEEEInfocom2009_SpatialSINRGames}.

The paper is organized as follow. In Section \ref{sec:ModelAndSetup} the Poisson bipolar MANET model is set up and performance metrics of interest are discussed.  Section \ref{sec:TheTeamCase} considers the team case in which all nodes use the same MAP. Section \ref{sec:NonCoperativeCase} considers a scenario in which all nodes are selfish. A medium access game is defined among the nodes. Section \ref{sec:UtilityWithGoodput} studies this game with a utility function that is based on the goodput and a pricing based on the transmission energy costs. Section \ref{sec:UtilityWithDelay} studies the medium access game with another utility that is based on potential delay as performance metric and transmission costs. Section \ref{sec:PriceofAnarchy} analyzes the price of anarchy for these games. We end with some remarks and future work in Section \ref{sec:Conclusion}. All the proofs are given in Appendices.
\vspace*{-.1in}
\section{MODEL AND SETUP}
\vspace*{-.1in}
\label{sec:ModelAndSetup}
Consider the simplified mobile ad hoc network (MANET) model called the Poisson bipolar model proposed in \cite{BBM06}. Assume that each node follows the slotted version of the Aloha medium access control (MAC) protocol. Each dipole of the MANET consists of a transmitter and an associated receiver. Let the sequence $\{X_i,y_i\}_{i\geq 1}$ denote the location of the transmitters and receivers, where $y_i$ is the location of receiver associated with the transmitter at $X_i$. We assume that the transmitters $\Phi=\{X_i\}_{i\geq 1}$ are scattered in the Euclidian plane according to an homogeneous P.p.p of intensity $\lambda$. In this paper we consider a scenario in which each receiver is at a fixed distance $r>0$ from its  transmitter, i.e., $|X_i-y_i|=r$ for all $i$ \footnote{Our analysis continues to hold when the distance between transmitters and their receivers are i.i.d. }. Consider a snapshot of the location of the transmitters. Let the sequence $n=0,1,2,\cdots$ denote the common sequence of time slots with respect to which all nodes are synchronized. We associate with each node a multi dimensional mark that carries information about the MAC status and the fading condition at each time slot. We follow the notation of \cite{SGandWNVolII}[Chap. 17]. Let the sequence  $M_i=\{e_i(n),F_i(n)\}_{n\geq 0}$ denote the marks associated with node $i$, where
\begin{itemize}
    \item $e_i=\{e_i(n)\}_{n \geq 0}$ denotes the sequence of MAC status of node $i$. $e_i(n)$ is an indicator function that takes value $1$ if node $i$ decides to transmit in time slot $n$; otherwise it takes value zero. The random variable $e_i(n)$ are assumed to be i.i.d in $i$ and $n$, and independent of everything else.
     \item $F_i=\{F_i^j(n): j \geq 1\}_{n \geq 0}$ denotes the sequence of channel conditions between  the transmitter of node $i$ and all receivers (including its own receiver). It is assumed that channel conditions are i.i.d across the nodes and time slots, with a generic distribution denoted by $F$ with mean $1/\mu$.
    \item The marks are assumed to be independent in space and time.
\end{itemize}

The probability that the $i$th node transmits in time slot $n$ is $p:=\Pr\{e_i(n)=1\}=\mathbb{E}[e_i(n)]$ (Medium Access Probability (MAP)). This defines a pair of independent Poisson processes at each time slot $n$ , one representing transmitters $\Phi^1(n)=\{X_i,e_i(n)=1\}$ and the other non-transmitters $\Phi^0(n)=\{X_i,e_i(n)=0\}$ with intensities $p\lambda$ and $(1-p)\lambda$ respectively. All the nodes transmit at a fixed power denoted by $P$.

Let $l(x,y)$ denote the attenuation function between any two given points $x,y \in \mathbb{R}^2$. We assume that this function just depends on the distance between the points, i.e., $|x-y|$. With a slight abuse of notation we denote this function as $l(x,y)=l(|x-y|).$ We assume the following form for this attenuation function
\begin{equation}
\label{eqn:OPL3model}
l(x,y)=(A|x-y|)^{-\beta} \;\;\text{for} \;\;A>0 \;\; \text{and} \;\; \beta >2.
\end{equation}

A signal transmitted by a transmitter located at ${X_i}$ is successfully received in time slot $n$ if the SINR, at the receiver, is larger than some threshold $T$, i.e.,
\begin{equation}
\label{eqn:SINRThreshold}
SINR_i(n):=\frac{PF_i^i l(r)}{I_{\tilde\Phi^1(n)}(y_i)+W(n)} >T, \;\;
\end{equation}
where $W(n)$ denotes the thermal noise power at the receiver and $I_{\tilde\Phi^1(n)}$  denotes the shot noise of the P.p.p $\Phi^1(n)$ in time slot $n$, namely, $I_{\tilde\Phi^1(n)}(y_i)=\sum_{X_j \in \Phi^1(n)}PFl(|X_j-y_i|)$. We assume that the noise is an i.i.d process.

Consider a typical node at the origin, $X_0=0$ with mark $M_0(0)=(e_0(0), F_0(0))$ at $n=0$. The typical node is said to be covered in slot $n=0$ if (\ref{eqn:SINRThreshold}) holds given that it is a transmitter. Then the coverage probability of the typical node is
\begin{equation}
\label{TypicalNodeCoverageProbability}
\mathbf{P}^0\left \{\frac{PF l(r)}{I_{\tilde\Phi^1(0)}(y_0)+W(0)} >T \;\bigg  | \; e_0(0)=1\right \},
\end{equation}
where $\mathbf{P}^0$ denotes the Palm distribution \cite{SGandWNVolI}[Chap. I] of the stationary marked P.p.p $\tilde\Phi$. Note that due to time-homogeneity this conditional probability does not depend on $n$. By  using Slivnyak's theorem \cite{SGandWNVolI}, the coverage probability of a typical nodes when all other nodes use the same MAP is evaluated in \cite{Infocomm2010_BartekMuhlet_NonSlottedAloha}. Continuing the notation used in \cite{JSACSep2009_BaccelliBartekPaul} we denote this coverage probability (non-outage probability) as $p_c:=p_c(r, p\lambda , T)$. Consider a tagged node that uses MAP $p^\prime$. Then the tagged node is a transmitter with probability $p^\prime$ and a non transmitter with probability $(1-p^\prime)$. We refer to the product of the MAP and the coverage probability of the tagged node as \emph{goodput} and denote it as $
g(p^\prime, p):=p^\prime p_c(r, p\lambda , T).
$
We shall be interested in performance metrics of the form
\[\frac{g(p^\prime,p)^{1-\alpha}}{1-\alpha},\]
where $\alpha \geq 0$ denotes the fairness parameter. In this paper we consider the performance metric corresponding to $\alpha=0$ and $\alpha=2$. For $\alpha=0$, the performance metric corresponds to the goodput, and for $\alpha=2$, it corresponds to the negative of \emph{potential delay} introduced in \cite{BWSharingObjAlgo_TIT2002_MassRobe}, which is defined as the reciprocal of the rate. Potential delay may be interpreted as the delay incurred by a node in successfully delivering its packets at the receiver. Indeed, it has been shown in \cite{SGandWNVolII}[Sec. 17] that the \emph{local delay} is given as reciprocal of goodput when the node locations form an i.i.d process across time slots. In all other cases it is a lower bound. Local delay is defined as the number of time slots needed by a node to successfully transmit a packet (with retransmissions). We denote the potential delay as $t(p^\prime,p)=1/g(p^\prime,p)$.

In the following sections we consider two scenarios. First we assume that the nodes of the MANET cooperate, i.e., use the same MAP that is assigned to them in each time slot. The value of a MAP that optimizes the spatial network performance is evaluated. We then consider a game  scenario in which each  node is selfish and chooses a MAP that optimizes its own performance taking into account the transmission costs. We study the effect of the transmission costs on the network performance at equilibrium and look for a cost factor that results in improved spatial network performance at equilibrium.

\vspace*{-.1in}
\section{RATE CONTROL: THE TEAM CASE}
\label{sec:TheTeamCase}
\vspace*{-.1in}
In this section we assume that all the nodes belong to a single operator, and transmit at the MAP set by the operator. The following proposition immediately follows from \cite{BBM06}[Lemma 3.2]:
\begin{proposition}
Let each node in the the Poisson bipolar transmit with MAP $p$ and $F$ be Rayleigh distributed with mean $1/\mu$; then the goodput is
\begin{eqnarray}
\nonumber
g(p) &=&  p \exp \left\{
- 2 \pi \lambda p\int_0^{\infty}
\frac{u}{1 + l(r)/(T l(u))}
{\rm d} u \right\} \times \psi_{_W} (\mu T/Pl(r) ),
\end{eqnarray}
where $\psi_{_W}(\cdot)$ denotes the Laplace transform of the noise power $W$.
\end{proposition}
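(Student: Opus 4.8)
The plan is to compute the goodput $g(p)$ by evaluating the coverage probability $p_c(r,p\lambda,T)$ of a typical transmitting node, and then multiply by $p$. Since the network is described in the symmetric (team) case where every node transmits with the same MAP $p$, I would invoke Slivnyak's theorem to reduce the Palm expectation in $(\ref{TypicalNodeCoverageProbability})$ to an ordinary expectation over the homogeneous P.p.p.\ $\Phi^1(0)$ of transmitters (of intensity $p\lambda$), with an added point at the origin that does not contribute to its own interference. The coverage event is $\{PF_0^0 l(r) > T(I_{\tilde\Phi^1(0)}(y_0)+W(0))\}$, so the task is to evaluate $\mathbf{P}^0\{F_0^0 > (T/Pl(r))(I+W)\}$.

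\noindent
\textbf{Key steps.} First I would use the Rayleigh fading assumption: with $F_0^0$ exponential of mean $1/\mu$, the conditional probability $\Pr\{F_0^0 > s\} = \exp(-\mu s)$, which turns the coverage probability into a product of Laplace transforms evaluated at $\mu T/(Pl(r))$. Explicitly,
\begin{equation}
\nonumber
p_c = \mathbb{E}\left[\exp\!\left(-\frac{\mu T}{P l(r)}\,(I+W)\right)\right]
= \mathcal{L}_{I}\!\left(\frac{\mu T}{P l(r)}\right)\,\psi_{_W}\!\left(\frac{\mu T}{P l(r)}\right),
\end{equation}
where the factorization uses independence of the interference shot noise $I$ and the noise $W$, and $\psi_{_W}$ is exactly the Laplace transform of $W$ appearing in the statement. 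Second, I would compute $\mathcal{L}_{I}$ using the Laplace functional of a Poisson point process: for $I = \sum_{X_j}PF l(|X_j - y_0|)$ over $\Phi^1(0)$ of intensity $p\lambda$, the standard formula gives
\begin{equation}
\nonumber
\mathcal{L}_{I}(\xi)
= \exp\!\left(-p\lambda \int_{\mathbb{R}^2}\big(1 - \mathbb{E}_F[\,e^{-\xi P F l(|x|)}\,]\big)\,{\rm d}x\right).
\end{equation}
Third, inserting $\xi = \mu T/(Pl(r))$ and taking the expectation over the exponential fading $F$ (again Rayleigh), the inner expectation becomes $\mathbb{E}_F[e^{-\mu T F l(|x|)/l(r)}] = 1/(1 + T l(|x|)/l(r))$, so that $1 - \mathbb{E}_F[\cdots] = 1/(1 + l(r)/(Tl(|x|)))$. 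Converting to polar coordinates via ${\rm d}x = 2\pi u\,{\rm d}u$ then yields exactly the exponential factor in the stated expression, and multiplying through by the retained factor $p$ completes the derivation.

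\noindent
\textbf{Main obstacle.} The genuinely delicate point is handling the fading inside the Poisson shot noise correctly: one must average over both the point process \emph{and} the independent fading marks $F$, and the two exponential (Rayleigh) assumptions play distinct roles — one for the desired signal $F_0^0$ (producing the clean $e^{-\mu s}$ that linearizes the threshold condition) and one for the interfering channels (producing the closed-form $1/(1+Tl(u)/l(r))$ after integrating out $F$). Keeping these separate, and justifying the factorization $\mathcal{L}_{I}\cdot\psi_{_W}$ through independence, is where care is needed; the remaining manipulations are the routine Laplace-functional computation. Since this result is quoted as an immediate consequence of \cite{BBM06}[Lemma 3.2], I expect the proof to simply specialize that lemma's general coverage formula to the Rayleigh case and the attenuation model $(\ref{eqn:OPL3model})$, so the obstacle is really one of correctly matching notation rather than of new technical difficulty.
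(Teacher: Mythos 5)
Your proposal is correct and follows exactly the argument underlying the cited result: the paper gives no proof of its own, deferring to \cite{BBM06}[Lemma 3.2], and your derivation (Slivnyak's theorem, the exponential tail of $F_0^0$ converting coverage into $\mathcal{L}_I(\mu T/(Pl(r)))\,\psi_{_W}(\mu T/(Pl(r)))$ by independence of $I$ and $W$, the Poisson Laplace functional with the fading mark integrated out to give $1/(1+l(r)/(Tl(u)))$, and the passage to polar coordinates) is precisely the standard proof of that lemma specialized as the paper intends. Your identification of the two distinct roles of the Rayleigh assumption, and the factorization via independence, are exactly the points where care is required, and you have handled them correctly.
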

\begin{corol}
\label{corol:ZeroNoiseSuccessRate}
For F as above, zero noise power $W\equiv 0$, and the path loss model in (\ref{eqn:OPL3model}), the goodput of a typical node is
\begin{equation}
\label{eqn:TeamSuccessRate}
g(p)=p\exp\{-2\pi\lambda p r^2 T^{2/\beta}K (\beta)\},
\end{equation}
where $K(\beta)=\frac{\Gamma (2/\beta)\Gamma (1-2/\beta)}{\beta}$ and $\Gamma (x)=\int_{0}^\infty z^{x-1}e^{-z}.$
\end{corol}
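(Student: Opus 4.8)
The plan is to derive Corollary~\ref{corol:ZeroNoiseSuccessRate} as a direct specialization of the preceding Proposition. Since the Proposition already gives the goodput in closed form as a product of an exponential (capturing interference) and a Laplace transform $\psi_{_W}$ (capturing noise), the corollary amounts to two simplifications: first, setting $W\equiv 0$ kills the noise term, and second, substituting the explicit path-loss model~\eqref{eqn:OPL3model} into the interference integral and evaluating it in closed form. The main work is therefore the integral evaluation; everything else is substitution.

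First I would observe that with $W\equiv 0$ deterministically, its Laplace transform satisfies $\psi_{_W}(s)=\mathbb{E}[e^{-sW}]=1$ for every argument $s$, so the factor $\psi_{_W}(\mu T/Pl(r))$ drops out of the expression in the Proposition. This leaves
\[
g(p)=p\exp\left\{-2\pi\lambda p\int_0^\infty \frac{u}{1+l(r)/(Tl(u))}\,\mathrm{d}u\right\}.
\]
The remaining task is to show that the integral equals $r^2 T^{2/\beta}K(\beta)$ under~\eqref{eqn:OPL3model}.

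Next I would substitute $l(r)=(Ar)^{-\beta}$ and $l(u)=(Au)^{-\beta}$, so that the ratio becomes $l(r)/(Tl(u))=T^{-1}(u/r)^{\beta}$. The integrand is thus $u/\bigl(1+T^{-1}(u/r)^{\beta}\bigr)$, and the constant $A$ cancels entirely, which is a good consistency check. I would then rescale by setting $u=rT^{1/\beta}v$ (equivalently $v^{\beta}=T^{-1}(u/r)^{\beta}$), which converts the integral into $r^2T^{2/\beta}\int_0^\infty v/(1+v^{\beta})\,\mathrm{d}v$. The constant $K(\beta)$ is precisely the value of this dimensionless integral: the hard part is recognizing $\int_0^\infty v/(1+v^{\beta})\,\mathrm{d}v=\frac{\Gamma(2/\beta)\Gamma(1-2/\beta)}{\beta}$, which follows from the standard Beta-function identity $\int_0^\infty t^{s-1}/(1+t)\,\mathrm{d}t=\Gamma(s)\Gamma(1-s)$ after the further substitution $t=v^{\beta}$. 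The condition $\beta>2$ in~\eqref{eqn:OPL3model} is exactly what guarantees convergence at infinity (so that $2/\beta<1$ and the Gamma functions are finite).

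I do not anticipate a genuine obstacle here, since the result is a corollary and the structural form is inherited from the Proposition; the only place requiring care is the change of variables and the invocation of the Beta-function identity, together with checking that $\beta>2$ secures integrability. Assembling these pieces yields~\eqref{eqn:TeamSuccessRate} with $K(\beta)$ as stated.
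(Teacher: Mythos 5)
Your proposal is correct and follows exactly the route the paper intends: the corollary is obtained by specializing the preceding Proposition, with $W\equiv 0$ giving $\psi_{_W}\equiv 1$ and the interference integral evaluated in closed form via the substitutions $u=rT^{1/\beta}v$ and $t=v^{\beta}$ together with the Beta-function identity $\int_0^\infty t^{s-1}/(1+t)\,\mathrm{d}t=\Gamma(s)\Gamma(1-s)$ for $s=2/\beta\in(0,1)$. The paper itself omits this computation, citing \cite{BBM06} and \cite{SGandWNVolII}, so your derivation simply makes explicit the standard calculation behind the cited result, including the correct observation that $\beta>2$ is what guarantees integrability at infinity.
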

Hence forth we adopt the assumptions of Corollary  \ref{corol:ZeroNoiseSuccessRate} in all the subsequent calculations. However, our results hold when $W$ has any distribution with finite mean, as it appears as a constant multiplicative factor in (\ref{eqn:TeamSuccessRate}). For notational convenience we write $C:=C(\beta, T, r)= 2\pi r^2 T^{2/\beta}K (\beta)$.

The operator is interested in optimizing the social performance of the network. In particular, we assume that the operator aims at maximizing the mean goodput per unit area or minimizing the mean delay per unit area. The performance seen by a typical node can be used to derive the spatial performance of the Poisson bipolar MANET. Campbell's formula \cite{SGandWNVolI}[Sec. 2.1.2] for stationary Poisson point processes ensures that the performance experienced by a typical node is also that of the spatial average performance of the Poisson MANET. The mean goodput per unit area is then the product of the intensity of the P.p.p and the goodput, i.e., $\lambda  g(p)$. This quantity is referred to as the density of success and denoted by $d_{suc}(r, p\lambda, T)$ in \cite{SGandWNVolII}[Chap. 16]. We denote this term  simply as $d_s(p)$. Similarly, the mean delay per unit surface area is given by $\lambda t(p)$. We denote this spatial performance metric as $d_t(p)$ and refer to it as spatial delay density. Note that the spatial delay density is the  reciprocal of the density of success multiplied by a factor $\lambda^2$. Hence the MAP that maximizes the $d_s(p)$ also minimizes $d_t(p)$. The MAP that optimizes the density of success is given in \cite{SGandWNVolII}[Prop 16.8] and \cite{SGandWNVolII}[Corol. 16.9]
: \begin{proposition}
\label{prop:GlobalMaximaDensityofSuccess}
Under the assumption of Corollary \ref{corol:ZeroNoiseSuccessRate}, the MAP that maximizes the density of success and minimizes the density of delay is given by
\begin{equation}
\label{eqn:OptimalMAPDensityofSucess}
p_m=\min\{1,1/\lambda C\},
\end{equation}
and the corresponding optimal density of success is given by
\begin{equation}
\label{eqn:GlobalMaximumDensitySuccess}
d_s(p_m)=\left\{
  \begin{array}{ll}
    1/(e\lambda C), & \hbox{if\;\;} \lambda C >1 \\
    \lambda\exp\{-\lambda C\}, & \hbox{if\;\;} \lambda C \leq 1,
  \end{array}
\right.
\end{equation}
and the corresponding optimal delay density is given by
\begin{equation}
\label{eqn:GlobalOptimalDensityOfDelay}
d_t(p_m)=\left\{
  \begin{array}{ll}
    \lambda^2 e C, & \hbox{\;\; if \;\; } \lambda C >1\\
    \lambda \exp\{\lambda C\}, & \hbox{\;\; if \;\; } \lambda C \leq 1.
  \end{array}
\right.
\end{equation}
\end{proposition}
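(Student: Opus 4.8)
The plan is to treat (\ref{eqn:OptimalMAPDensityofSucess}) as a one-dimensional constrained optimization over the feasible set $p\in[0,1]$, exploiting the explicit goodput formula from Corollary \ref{corol:ZeroNoiseSuccessRate}. Writing $C$ for the constant $2\pi r^2 T^{2/\beta}K(\beta)$, that corollary gives $g(p)=p\,e^{-\lambda C p}$, so the density of success is
\begin{equation}
\nonumber
d_s(p)=\lambda g(p)=\lambda p\, e^{-\lambda C p},\qquad p\in[0,1].
\end{equation}
Since the spatial delay density satisfies $d_t(p)=\lambda^2/d_s(p)$, as noted just before the statement, minimizing $d_t$ is equivalent to maximizing $d_s$; hence it suffices to locate the maximizer of $d_s$ and read off both optimal values at the end.

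First I would establish that $d_s$ is strictly unimodal on $[0,\infty)$. Differentiating,
\begin{equation}
\nonumber
d_s'(p)=\lambda e^{-\lambda C p}\bigl(1-\lambda C p\bigr),
\end{equation}
and since $\lambda e^{-\lambda C p}>0$ the sign of $d_s'$ is exactly that of $1-\lambda C p$. Thus $d_s$ is strictly increasing on $[0,1/(\lambda C))$ and strictly decreasing on $(1/(\lambda C),\infty)$, so its unique unconstrained maximizer is the stationary point $\tilde p=1/(\lambda C)$.

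The remaining step is to project this maximizer onto the feasible interval $[0,1]$, which splits into two cases according to whether $\tilde p$ lies inside $[0,1]$. If $\lambda C>1$ then $\tilde p=1/(\lambda C)<1$ is interior, so $p_m=\tilde p=1/(\lambda C)$; if $\lambda C\le 1$ then $\tilde p\ge 1$, so $d_s$ is increasing throughout $[0,1]$ and the maximum is attained at the boundary $p_m=1$. Both cases combine into $p_m=\min\{1,1/(\lambda C)\}$, which is (\ref{eqn:OptimalMAPDensityofSucess}). Substituting $p_m$ back into $d_s$ then yields the two branches of (\ref{eqn:GlobalMaximumDensitySuccess}), and inserting these into $d_t=\lambda^2/d_s$ gives the corresponding branches of (\ref{eqn:GlobalOptimalDensityOfDelay}).

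The argument is essentially elementary calculus, so I do not expect a genuine obstacle; the only point requiring care is the feasibility constraint $p\le 1$. Because the MAP is a probability, the unconstrained optimum $1/(\lambda C)$ is admissible only in the interference-limited regime $\lambda C>1$; in the sparse regime $\lambda C\le 1$ one must instead recognize that the optimum sits on the boundary rather than at the interior stationary point, and it is exactly this boundary case that produces the second branch of each formula. Verifying the strict monotonicity of $d_s$ on each side of $\tilde p$ (so that the boundary value is genuinely optimal when $\tilde p\ge 1$) is what makes the case split rigorous.
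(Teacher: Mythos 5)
Your method is sound and is, in substance, the only proof there is: the paper itself gives no in-text argument for Proposition \ref{prop:GlobalMaximaDensityofSuccess}, importing it from \cite{SGandWNVolII}, and the underlying argument there is exactly your one-dimensional constrained maximization of $d_s(p)=\lambda p\,e^{-\lambda C p}$ on $[0,1]$ via the sign of $d_s'(p)=\lambda e^{-\lambda C p}(1-\lambda C p)$, with the case split on whether the stationary point $1/(\lambda C)$ is feasible. Your handling of the boundary case $\lambda C\le 1$, including the monotonicity check that makes $p_m=1$ genuinely optimal, is correct and complete.

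The one genuine lapse is your final sentence-long step, where you assert without computing that substituting $p_m$ back ``yields the two branches'' of (\ref{eqn:GlobalMaximumDensitySuccess}). Carrying out the substitution in the regime $\lambda C>1$ gives $d_s(p_m)=\lambda\cdot\frac{1}{\lambda C}\,e^{-1}=1/(eC)$, \emph{not} the stated $1/(e\lambda C)$. In fact the stated success-density branch is internally inconsistent with the stated delay branch: since $d_t=\lambda^2/d_s$, the value $d_t(p_m)=\lambda^2 eC$ forces $d_s(p_m)=1/(eC)$, so the proposition (and likewise (\ref{eqn:EquilibriumMaximumSuccessRate}) later in the paper) carries a stray factor of $\lambda$, while your delay branch and both $\lambda C\le 1$ branches check out as written. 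Had you performed the substitution you advertise, you would have caught this; as written, your proposal silently endorses a formula that your own computation contradicts. Flag the discrepancy explicitly and record $d_s(p_m)=1/(eC)$ for $\lambda C>1$, and the proof is complete.
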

\vspace*{-.1in}
\section{RATE CONTROL: THE NON-COOPERATIVE CASE}
\vspace*{-.1in}
\label{sec:NonCoperativeCase}
In this section we assume that each node of the Poisson bipolar MANET is a selfish player. We consider a non-cooperative case, and model it as a game with an infinite number of players as follows. We use node and player interchangeably.

Each node can take two actions: transmit (T) or no-transmit (NT). We assume that when taking the decision whether to transmit or not, a player does not
know the positions of other mobiles nor the level of the SINR. A mixed strategy chosen by a node corresponds to its MAP. A player chooses its MAP once for all and uses always the same MAP. The choice is done without knowledge of the realization of the interference.

We shall restrict to symmetric Nash equilibria (SNE) , in which all nodes use the same MAP $p$ at equilibrium. The utility of a {\it tagged} player can then be written as a function of his strategy $p^\prime$ and of the strategy $p$ of all the others. We denote the utility of the tagged player as $U(p^\prime,p)$\footnote{The game has infinitely many players, hence the utility should be defined on a infinite product strategy space. However, since we are restricting to the case where all players other than tagged node use the same strategy $p$, we write utility as a function of two arguments}. The objective of each player is to maximize its utility. Let  $U(p^*,p^*)$ denote the value of the utility function at SNE $p^*$.
\begin{defn}
\label{defn:SymmetricNashEquilibrium}
$p^* \in [0\;1]$ is said to be a
symmetric Nash equilibrium if for any node the following holds
\[U(p^* , p^*) = \max_{p\in [0\; 1]} U( p,p^*).\]
\end{defn}

In the rest of this section we look for the appropriate utility functions that characterize the performance of an individual player. We begin with computing the goodput of a tagged node. The next lemma follows from Lemma 3.1 and Corollary 3.2 in \cite{BBM06}.

\begin{lemma}
Consider a Poisson MANET with the assumptions in Corollary \ref{corol:ZeroNoiseSuccessRate}. Let a tagged node transmit with MAP $p^\prime$, while all other players transmit with MAP $p$. Then the goodput of the tagged node is
\begin{equation}
g(p^\prime,p)=p^\prime \exp\{-p\lambda C\}.
\end{equation}
 where
$C =2\pi r^2 T^{2/\beta} K(\beta)$ is defined earlier.
\end{lemma}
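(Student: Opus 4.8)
The plan is to exploit the fact that the tagged node's own MAP $p^\prime$ enters its goodput only through the probability that it decides to transmit, and not at all through the coverage probability it experiences. First I would write $g(p^\prime,p)=p^\prime\, p_c$, where $p_c$ is the coverage probability of the tagged node conditioned on its being a transmitter in the slot under consideration. The decisive observation is that a transmitter never interferes with its own reception, so the coverage event $\{SINR_0>T\}$ depends on the interference field $\Phi^1$ only through the positions of the \emph{other} transmitting nodes, together with the tagged node's own desired-signal fading $F_0^0$. Consequently $p_c$ does not depend on $p^\prime$, and all of the $p^\prime$-dependence is carried by the leading factor.

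Next I would pin down this interference field. Applying Slivnyak's theorem to condition on the tagged transmitter at the origin, the remaining points still form a homogeneous P.p.p of intensity $\lambda$. Each of these nodes independently decides to transmit with probability $p$, so by the thinning property of Poisson processes the set of interfering transmitters is itself a P.p.p of intensity $p\lambda$, and this intensity is manifestly independent of $p^\prime$. This is precisely the interference field arising in the team case of Corollary \ref{corol:ZeroNoiseSuccessRate}, in which every node transmits at the common MAP $p$.

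It then remains to evaluate $p_c$. Under Rayleigh fading the desired-signal power gain $F_0^0$ is exponential, so with zero noise $p_c = \mathbb{E}\left[\exp\left(-\frac{\mu T}{P\,l(r)}\,I\right)\right]=\mathcal{L}_I\!\left(\frac{\mu T}{P\,l(r)}\right)$, the Laplace transform of the shot noise $I$ generated by the intensity-$p\lambda$ process; computing $\mathcal{L}_I$ through the probability generating functional of the Poisson process and carrying out the standard radial integral with the path-loss model (\ref{eqn:OPL3model}) yields $p_c=\exp\{-p\lambda C\}$ with $C=2\pi r^2 T^{2/\beta}K(\beta)$. Rather than redo this integral, however, I would simply invoke Corollary \ref{corol:ZeroNoiseSuccessRate}: there $g(p)=p\,p_c=p\exp\{-p\lambda C\}$ with all nodes at MAP $p$, whence $p_c=\exp\{-p\lambda C\}$. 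Since the interference field, and therefore $p_c$, is unchanged when the tagged node switches its own MAP to $p^\prime$, the claim $g(p^\prime,p)=p^\prime\exp\{-p\lambda C\}$ follows by multiplying by $p^\prime$.

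The main obstacle is conceptual rather than computational: one must justify rigorously that the tagged node's own transmission probability decouples cleanly from its coverage probability. This hinges on the saturated-source, self-non-interfering structure of the model together with the independence of the marks $e_i(n)$ across nodes, which guarantee that thinning the interferers at rate $p$ is genuinely independent of the tagged node's choice $p^\prime$. Once this decoupling is granted, the remainder is a direct appeal to the already-established team-case formula, so no new integral needs to be evaluated.
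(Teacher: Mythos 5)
Your proposal is correct and follows essentially the same route as the paper, which simply derives the lemma from Lemma 3.1 and Corollary 3.2 of the cited Poisson bipolar model reference: under Slivnyak's theorem the interferers form an independently thinned P.p.p of intensity $p\lambda$, the coverage probability is the Laplace transform of the resulting shot noise under Rayleigh fading, $p_c=\exp\{-p\lambda C\}$, and the tagged node's own MAP $p^\prime$ enters only as the multiplicative transmission probability. Your explicit emphasis on the decoupling of $p^\prime$ from the interference field is exactly the (implicit) content of the paper's citation-based argument, so nothing is missing.
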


\noindent
We note the following:
\begin{enumerate}
\item
The goodput is monotone increasing
in $p'$. Hence, if the objective of each mobile is
to maximize its goodput, then the only equilibria
is $p=1$ for all nodes.
\item
Under the conditions of the above lemma,
the expected transmission energy consumption of the tagged node per time slot is proportional to $p^\prime$ and does not depend on the MAP $p$ of the other nodes. Thus the ratio between the goodput and expected energy consumption does not depend on $p^\prime$.
We conclude that any $p$ is an  equilibria
when the criterion of each node is that of
minimizing the ratio between goodput and energy.
\end{enumerate}
\noindent
Keeping the above remarks in view and noting that
the utility related to goodput is not a linear function of the MAP, see
for example \cite{emodel2}, we shall be interested in utilities\footnote{With some abuse of notation we use the notation $U(p)$ to denote the expected system utility when the same $p$ is used by all players. We consider utilities which attain optimal values for some $p$ and denote the latter, again with some abuse of notation, as $p_m$. It will be clear from the context if $p_m$ is an optimizer of the spatial performance metric or the team utility.} of the form
\begin{equation}
\label{eqn:GeneralUtility}
U( p^\prime,p)
=f( p^\prime,p) - w(p')
\end{equation}
for each player, where $f$ gives the performance measure of interest and $w$ is related to dis-utility.
\noindent
We consider a dis-utility based on the expected transmission energy costs incurred by each node. Let $\rho$ denote the price per unit transmission energy for each node. Then the expected transmission cost for the tagged node that uses MAP $p^\prime$ is $\rho p^\prime P$. We define $w(p^\prime)=\rho p^\prime \footnote{without loss of generality we assume $P=1$}$ for a node that uses MAP $p^\prime$.

Assume that the function $f$ is a concave function in $p^\prime$ and continuous in $p$; then the arguments in \cite{rosen}[Thm. 1] can be used to show the existence of SNE. We state this result in the following lemma. The proof is given in Appendix \ref{app:ProofNEExistance}.
\begin{lemma}
\label{lma:NEExistance}
Assume that the utility function $U(p^\prime,p)$ is concave in $p^\prime$ and continuous in $p$.  Then SNE exist.
\end{lemma}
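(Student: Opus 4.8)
The plan is to recast the search for a symmetric Nash equilibrium as a fixed-point problem for the best-response correspondence of a single representative player, and then to invoke Kakutani's fixed-point theorem exactly along the lines of Rosen's treatment of concave games. Define the best-response correspondence $BR:[0,1]\rightrightarrows[0,1]$ by
\[
BR(p)=\Big\{q\in[0,1]: U(q,p)=\max_{s\in[0,1]}U(s,p)\Big\}.
\]
By Definition \ref{defn:SymmetricNashEquilibrium}, a point $p^*$ is an SNE precisely when $p^*\in BR(p^*)$, so it suffices to produce a fixed point of $BR$.

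First I would verify the hypotheses of Kakutani's theorem one by one. The strategy set $[0,1]$ is a nonempty, compact, convex subset of $\mathbb{R}$. For each fixed $p$, the map $q\mapsto U(q,p)$ is concave on the compact interval $[0,1]$, so once the maximum is attained the maximizing set is convex (the set of maximizers of a concave function over a convex set is convex), giving that $BR(p)$ is convex whenever it is nonempty. Thus the two purely algebraic requirements, convexity of the domain and convexity of the values, are immediate from the stated concavity.

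The key remaining step, and the one I expect to be the main obstacle, is to establish upper hemicontinuity (equivalently, a closed graph) of $BR$ together with the attainment of the maximum defining $BR(p)$. Both follow from joint continuity of $U$ via Berge's maximum theorem: $U$ is continuous in $p$ by hypothesis, and concavity makes $U(\cdot,p)$ continuous on the open interval $(0,1)$, while the constraint correspondence $p\mapsto[0,1]$ is trivially continuous with compact values. The delicate point is continuity up to the endpoints $q\in\{0,1\}$, which a merely concave finite function need \emph{not} possess, since it may jump downward at a boundary point and then fail to attain its supremum. I would supply this separately by noting that for the explicit utilities of Sections \ref{sec:UtilityWithGoodput} and \ref{sec:UtilityWithDelay} the map $U(\cdot,p)$ is in fact smooth on all of $[0,1]$; consequently the maximum is attained, $BR(p)$ is a nonempty compact convex set, and Berge's theorem yields that $BR$ is upper hemicontinuous with closed graph.

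With these verifications in place, Kakutani's fixed-point theorem applies: a convex-valued, upper-hemicontinuous self-correspondence with closed graph on a nonempty compact convex set has a fixed point $p^*\in BR(p^*)$, and by construction $U(p^*,p^*)=\max_{s\in[0,1]}U(s,p^*)$, so $p^*$ is a symmetric Nash equilibrium. I would also record a lighter route that is specific to the one-dimensional strategy space: when $U(\cdot,p)$ is \emph{strictly} concave the best response is single-valued and continuous in $p$, and existence of a fixed point follows from Brouwer's theorem (here just the intermediate value theorem) applied to $p\mapsto BR(p)-p$. This shortcut does not, however, cover the goodput game, whose utility $U(p',p)=p'e^{-p\lambda C}-\rho p'$ is linear and hence only weakly concave in $p'$, so the best response is set-valued and the full Kakutani argument is genuinely needed.
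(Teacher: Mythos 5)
Your proof is correct and takes essentially the same route as the paper's own: both define the best-response correspondence on $[0,1]$ and apply Kakutani's fixed-point theorem in the spirit of Rosen, with your verification of nonempty convex values and upper hemicontinuity (via Berge) merely spelling out what the paper asserts in a single line. The one slip is your side claim that the explicit utilities are smooth on all of $[0,1]$ — the delay utility of Section \ref{sec:UtilityWithDelay} diverges to $-\infty$ as $p^\prime \to 0$ — but this is harmless for your purposes, since it forces the maximum to be attained in the interior and the Kakutani argument goes through unchanged.
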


In this paper we assume that $\rho$ is set by a central agent who can regulate its value. The aim of the central agent is to optimize the performance of the network at equilibrium. We can also assume that the aim of the central agent is to maximize the network revenues at equilibrium. We refer to $\rho$
as the price factor.

In the next two sections we consider two utility functions defined in term of goodput and potential delay as the performance measure. We evaluate the MAP at equilibrium and the corresponding system performance. This system performance is then optimized by searching for the best price factor. The best achievable system performance, at equilibrium, is then compared with that evaluated when nodes act as a team.
\vspace*{-.1in}
\section{GOODPUT BASED UTILITY}
\vspace*{-.1in}
\label{sec:UtilityWithGoodput}
Assume that each node of the MANET is interested in maximizing its goodput taking into account the energy cost it incurs. Then by taking $f(p^\prime,p)=g(p^\prime,p)$ in (\ref{eqn:GeneralUtility}), we define the utility as
\begin{equation}
\label{eqn:UtilitySuccessRate}
U(p^\prime,p)=g(p^\prime,p)-\rho p^\prime=p^\prime \left \{\exp\{-p\lambda C\}-\rho \right \}.
\end{equation}
The objective of a each node is to choose a MAP that maximizes its utility, i.e.,
\[p^\prime \in \displaystyle \text{argmax}_{0\leq p^\prime \leq 1}\;\; U(p^\prime,p).\]
This utility function is a linear function in $p^\prime$ and convex in $p$. Then from Lemma \ref{lma:NEExistance}, a SNE exists. We proceed to calculate the SNE of this game.

When $\rho \geq 1$ the slope of the utility of the tagged node is non positive irrespective of the MAP of the other nodes. Then the  optimal strategy for each node is to choose $p=0$, which is a dominant strategy and hence the SNE. When $\rho < 1$ consider the following two cases.\\
Assume $\rho \geq \exp\{-\lambda C\} $: In this case the slope of the tagged node is always positive. Then the optimal strategy for the tagged node is to choose $p=1$ irrespective of the MAP chosen by the other nodes. Thus $p=1$ is a dominant strategy and hence also is the SNE.\\
Assume $\rho < \exp\{-\lambda C\} $: If each node other than the tagged node chooses a MAP such that
\begin{equation}
\label{eqn:EquilibriumMAPSuccessRate}
\exp\{- p \lambda C\}=\rho,
\end{equation}
then the utility of the tagged node in (\ref{eqn:UtilitySuccessRate}) is set to zero and is not affected by its strategy, i.e., the taged node becomes indifferent to its own strategy. Further, if any of the nodes deviates from the MAP that satisfies (\ref{eqn:EquilibriumMAPSuccessRate}), it will not gain anything given that all other nodes set their MAP value as in  (\ref{eqn:EquilibriumMAPSuccessRate}). Hence the MAP satisfying (\ref{eqn:EquilibriumMAPSuccessRate}) constitutes the SNE. We summarize the above observations in the following proposition.

\begin{proposition}
\label{prop:GoodputequilibriumMAP}
For any given $C, \lambda$, and $\rho >0$
\begin{itemize}
\item if $\rho \geq 1$ then $p^*=0$ is the SNE;
\item if $\exp\{-\lambda C \} \geq \rho$ then $p^*=1$ is the SNE;
\item if $\exp\{-\lambda C \} < \rho < 1$   then
$p^*=\frac{-\log \rho}{\lambda C}$ is the SNE.
\end{itemize}
\end{proposition}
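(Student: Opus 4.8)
The plan is to establish Proposition~\ref{prop:GoodputequilibriumMAP} by a direct case analysis on the sign of the bracketed coefficient in the utility (\ref{eqn:UtilitySuccessRate}). The key structural observation, which does all the work, is that $U(p^\prime,p)=p^\prime\left\{\exp\{-p\lambda C\}-\rho\right\}$ is \emph{linear} in the tagged player's own strategy $p^\prime$. Hence for fixed $p$ the best response is governed entirely by the sign of the slope $s(p):=\exp\{-p\lambda C\}-\rho$: if $s(p)>0$ the unique maximizer over $[0,1]$ is $p^\prime=1$; if $s(p)<0$ it is $p^\prime=0$; and if $s(p)=0$ every $p^\prime\in[0,1]$ is a maximizer. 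I would state this best-response characterization first, since each of the three bullets then follows by checking when the equilibrium condition $p^*\in\arg\max_{p^\prime}U(p^\prime,p^*)$ (Definition~\ref{defn:SymmetricNashEquilibrium}) is consistent.

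\textbf{First case ($\rho\geq 1$).} Since $\exp\{-p\lambda C\}\le 1\le\rho$ for every $p\in[0,1]$, the slope $s(p)\le 0$ for all $p$, so $p^\prime=0$ is a best response regardless of $p$; in particular it is a best response to itself. I would note this is in fact a dominant strategy, which both proves $p^*=0$ is an SNE and rules out any other symmetric equilibrium. \textbf{Second case ($\rho\leq\exp\{-\lambda C\}$).} Here for every $p\in[0,1]$ we have $\exp\{-p\lambda C\}\ge\exp\{-\lambda C\}\ge\rho$, so $s(p)\ge 0$ and $p^\prime=1$ is always a best response; again this is dominant, giving $p^*=1$ as the SNE. \textbf{Third (interior) case ($\exp\{-\lambda C\}<\rho<1$).} Now no constant strategy is dominant, so I would look for an equilibrium where the tagged player is rendered indifferent, i.e.\ where the common strategy $p^*$ makes the slope vanish: $\exp\{-p^*\lambda C\}=\rho$. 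Solving gives $p^*=-\log\rho/(\lambda C)$, and the two-sided strict inequality on $\rho$ is exactly what guarantees $p^*\in(0,1)$, so the candidate is feasible. At this $p^*$ the utility is identically zero in $p^\prime$, so every strategy (including $p^*$ itself) is a best response, confirming the SNE condition.

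\textbf{The only subtle point} — and the step I would be most careful about — is verifying that the interior indifference value genuinely satisfies Definition~\ref{defn:SymmetricNashEquilibrium} rather than merely being a stationary point. Because $U$ is linear in $p^\prime$, ``indifference'' means the maximum of $U(\cdot,p^*)$ is attained at \emph{every} point of $[0,1]$; in particular $U(p^*,p^*)=0=\max_{p^\prime}U(p^\prime,p^*)$, so the equilibrium inequality holds with equality. I would emphasize this is where linearity is essential: a strictly concave $f$ would pin down a unique best response and the indifference argument would not apply, whereas here the linear structure is precisely what makes the continuum of best responses collapse to a well-defined symmetric equilibrium. I would close by remarking that the three cases partition $(0,\infty)$ up to the boundary values, so the proposition is exhaustive.
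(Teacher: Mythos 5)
Your proof is correct and takes essentially the same route as the paper: it exploits the linearity of $U(p^\prime,p)$ in $p^\prime$, reads the best response off the sign of the slope $\exp\{-p\lambda C\}-\rho$, identifies the two boundary cases as dominant-strategy equilibria, and pins down the interior SNE via the indifference condition $\exp\{-p^*\lambda C\}=\rho$, with the extra (welcome) care of verifying Definition~\ref{defn:SymmetricNashEquilibrium} directly at the indifference point. Incidentally, your case conditions are stated consistently with the proposition, whereas the paper's in-text derivation inadvertently swaps the direction of the comparison $\rho \gtrless \exp\{-\lambda C\}$ in its second and third cases --- an evident typo that your version silently corrects.
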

\noindent
The goodput of each node at equilibrium is given by
\begin{equation}
\label{eqn:EquilibriumSuccessRate}
g(p^*,p^*)=\left\{
  \begin{array}{ll}
    0, & \hbox{\;if\;} \rho \geq 1\\
    \exp\{-\lambda C\}, & \hbox{\;if\;} \rho \leq \exp\{-\lambda C\}\\
    \frac{-\rho\log \rho}{\lambda C}, & \hbox{\;if \;} \exp\{-\lambda C \} < \rho < 1.
  \end{array}
\right.
\end{equation}
With the expression for goodput at equilibrium, we can look for the value of $\rho$ that maximizes it.
We assume that the objective of the central agent is to maximize the density of success at equilibrium. Then the optimization problem of the central agent is given by
\begin{equation}
\label{eqn:SuccessRateOptimization}
\begin{aligned}
& \underset{\rho}{\text{maximize}}
& & \frac{-\lambda \rho\log \rho}{\lambda C} &&& \text{subject to}
&&&& \exp\{-\lambda C\}<\rho < 1.
\end{aligned}
\end{equation}
The objective function in (\ref{eqn:SuccessRateOptimization}) is a concave function of $\rho>0$ attaining its maximum at
$\rho=1/e.$ If $\lambda C>1$ the optimal price factor lies within the constraint set and the operator can set $\rho^*=1/e$. Suppose $\lambda C \leq 1$; then the objective function is decreasing in the interval $\exp\{-\lambda C\}<\rho <1$ and the maximum is attained at $\rho^*=\exp\{-\lambda C\}.$  The maximum density of success at equilibrium with the optimal price factor is
\begin{equation}
\label{eqn:EquilibriumMaximumSuccessRate}
d_{s}(p^*,p^*)=\left\{
  \begin{array}{ll}
    1/(e\lambda C), & \hbox{if \;\;} \lambda C>1 \\
    \lambda\exp\{-\lambda C\}, & \hbox{if\;\;} \lambda C\leq 1.
  \end{array}
\right.
\end{equation}
Comparing (\ref{eqn:EquilibriumMaximumSuccessRate}) with the global optimal density of successful transmissions given in (\ref{eqn:GlobalMaximumDensitySuccess}), we have the following result.

\begin{thm}
The global optimal density of successful transmissions can be attained at equilibrium by setting the pricing factor $\rho$ as follows:
\begin{equation}
\label{eqn:OptimalRhoSuccessRate}
\rho^*=\left\{
  \begin{array}{ll}
    1/e, & \hbox{if \;\;}  \lambda C >1\\
    \exp\{-\lambda C\}, & \hbox{if \;\;} \lambda C \leq 1.
  \end{array}
\right.
\end{equation}
\end{thm}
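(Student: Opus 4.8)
The plan is to read the statement as a pure comparison result: the right-hand sides of the global optimum (\ref{eqn:GlobalMaximumDensitySuccess}) and of the optimized equilibrium density (\ref{eqn:EquilibriumMaximumSuccessRate}) are already written identically, and the $\rho^*$ displayed in (\ref{eqn:OptimalRhoSuccessRate}) is exactly the maximizer produced en route to (\ref{eqn:EquilibriumMaximumSuccessRate}). So the substance lies not in the final line but in justifying that the price $\rho^*$ forces the selfish equilibrium onto the social optimizer. First I would record the structural observation that at any symmetric equilibrium the density of success equals $\lambda\, g(p^*,p^*)$ with $p^*\in[0,1]$, whereas Proposition \ref{prop:GlobalMaximaDensityofSuccess} gives $d_s(p_m)$ as the maximum of $\lambda\, g(\cdot)$ over that same interval $[0,1]$. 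Consequently the equilibrium density can never strictly exceed $d_s(p_m)$, and it suffices to exhibit one price value at which equality is attained.

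Next I would carry out the optimization of the equilibrium density over $\rho$. Combining Proposition \ref{prop:GoodputequilibriumMAP} with the equilibrium goodput (\ref{eqn:EquilibriumSuccessRate}) and the identity $d_s=\lambda g$, the only regime in which $\rho$ acts as a genuine control is $\exp\{-\lambda C\}<\rho<1$, where the density becomes the objective of (\ref{eqn:SuccessRateOptimization}), namely the concave map $\rho\mapsto -\lambda\rho\log\rho/(\lambda C)$. Setting its derivative to zero gives the unconstrained maximizer $\rho=1/e$, and the whole argument then splits according to whether $1/e$ is feasible, which through the equivalence $\exp\{-\lambda C\}<1/e \iff \lambda C>1$ is governed precisely by the sign of $\lambda C-1$.

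For $\lambda C>1$ the point $\rho=1/e$ lies strictly inside the feasible interval, so $\rho^*=1/e$; substituting back yields the equilibrium density $1/(e\lambda C)$, coinciding with the $\lambda C>1$ branch of (\ref{eqn:GlobalMaximumDensitySuccess}). For $\lambda C\le 1$ the interior maximizer $1/e$ falls to the left of the feasible interval, so the concave objective is decreasing throughout $(\exp\{-\lambda C\},1)$ and its supremum is pushed to the left endpoint $\rho^*=\exp\{-\lambda C\}$. The delicate point, and the one I expect to require the most care, is exactly this boundary case: at $\rho=\exp\{-\lambda C\}$ the interior formula is only attained in the limit, so rather than differentiate I would invoke the boundary branch of Proposition \ref{prop:GoodputequilibriumMAP}, where the condition $\exp\{-\lambda C\}\ge\rho$ holds with equality and forces $p^*=1$. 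Since $\lambda C\le 1$ makes the social optimizer $p_m=\min\{1,1/(\lambda C)\}=1$ as well, the equilibrium MAP coincides with $p_m$, giving density $\lambda\exp\{-\lambda C\}$ and matching the $\lambda C\le 1$ branch of (\ref{eqn:GlobalMaximumDensitySuccess}). Collecting the two cases yields the displayed $\rho^*$ together with equality of the two densities, which is the assertion of the theorem; the preliminary upper-bound observation is what makes the remaining regimes ($\rho\ge 1$ forcing $p^*=0$, and the other endpoint) irrelevant, since once the global maximum is reached no other price can do better.
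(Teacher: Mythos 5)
Your proposal is correct and follows essentially the same route as the paper: maximize the equilibrium density $-\lambda\rho\log\rho/(\lambda C)$ over the regime $\exp\{-\lambda C\}<\rho<1$, split on whether the unconstrained maximizer $\rho=1/e$ is feasible (equivalently on the sign of $\lambda C-1$), and compare the resulting values with (\ref{eqn:GlobalMaximumDensitySuccess}). Your two additions --- the a priori observation that any symmetric-equilibrium density $\lambda g(p^*,p^*)$ is bounded by $d_s(p_m)$ so one equality-achieving price suffices, and the explicit appeal to the $p^*=1$ branch of Proposition \ref{prop:GoodputequilibriumMAP} at the boundary price $\rho=\exp\{-\lambda C\}$ (where the open constraint set of (\ref{eqn:SuccessRateOptimization}) makes the interior formula only a limit) --- are tidy refinements of steps the paper treats implicitly, not a different argument.
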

Also, note that with the pricing factor $\rho^*$ in (\ref{eqn:OptimalRhoSuccessRate}), the MAP of each node at equilibrium is the same as that achieving the global optimum given in (\ref{eqn:OptimalMAPDensityofSucess}).
Thus by pricing appropriately, one can use the selfish behavior of the players to reach an equilibrium at which the global optimal performance is attained.

\noindent
\textbf{Replicator Dynamics:} In this subsection we briefly discuss how we can use tools from the population games and evolutionary dynamics to study aggregate behavior in the network. We can think the set of nodes in the Poisson bipolar MANET as a single population, where each node can take two actions. Let each node decide to transmit with probability $p$, then each node will be in state of transmission with probability $p$ and in the state of non-transmission with probability probability 1-p. Then, by our assumption that nodes decisions are i.i.d in time and space, a fraction $p$ of the population will be in the state of transmission and the other (1-p) fraction in the non-transmission state. In short we denote the state of the network as $p$ when every node transmits with probability $p$.

One of the most frequently used dynamics to describe the evolution of behavior in population games is the replicator dynamics \cite{WHSandholm}. It describes the evolution of the fraction of the population in each state. In terms of population games, we can interpret the probability of successful transmission with the mean transmission energy costs as the fitness function for each node. Note that in our medium access game the fitness of a node that is not transmitting is zero. Let us denote the utility/fitness of a node that chooses action T while the rest of the population is in state $p^\prime$ as $U(T,p^\prime)$. The fraction of nodes that uses action $T$ evolve according to the replicator dynamics as follows \cite{WHSandholm}[Ch. 5,6]:
\[{\rm d}p(t)/{\rm d}t=p(t)[U(T,p(t))-\overline{U}(t)]\]
where $\overline{U}(t)$ denotes the mean fitness of the population. Taking (\ref{eqn:UtilitySuccessRate}) as the fitness function, the replicator dynamics is given by
\begin{eqnarray*}
\label{eqn:DynamicsReplicatorProbSucc}
\frac{{\rm d} p(t) }{{\rm d} t}&=&p(t)\left \{{e^{-\lambda C p(t)}-\rho}-p(t){e^{-\lambda C p(t)}-p(t)\rho}\right \}
=p(t)(1-p(t))\left \{e^{-\lambda C p(t)}-\rho \right \}.
\end{eqnarray*}
From the above equation the stationary point is clearly 0 or 1 or $\frac{\log \rho}{\lambda C}$ depending on the value of $\rho$, which is in agreement with Proposition \ref{prop:GoodputequilibriumMAP}. Figure \ref{fig:Dynamics} shows the convergence of  replicator dynamics to the SNE starting from any interior point.

\section{DELAY BASED UTILITY}
\label{sec:UtilityWithDelay}
In this section we define the utility of each player in terms of the delay associated in delivering the packet successfully at its receiver, and the transmission costs. As earlier, let a tagged node incur a cost of $\rho$ units per unit energy dissipated. By taking $f(p^\prime,p)=-t(p^\prime,p)$ in (\ref{eqn:GeneralUtility}), we get the following utility for the tagged node:
\begin{equation}
\label{eqn:DelayUtility}
U(p^\prime,p)=-\frac{1}{p^\prime \exp\{-p\lambda C\}} - p^\prime \rho.
\end{equation}
where $p^\prime$ is the MAP of the tagged node, and $p$ is the MAP used by all other nodes.

The objective of each node is to choose a MAP that maximizes its utility function (\ref{eqn:DelayUtility}). Or equivalently it can be given by the following minimization problem:
\begin{equation*}
\begin{aligned}
& \underset{p^\prime}{\text{minimize}}
& & \frac{\exp\{p\lambda C\}}{p^\prime } + p^\prime \rho &&& \text{subject to} \; 0\leq p^\prime \leq 1.
\end{aligned}
\end{equation*}
For a given value of $p$, the utility function in (\ref{eqn:DelayUtility}) is a concave function in $p^\prime $ and continuous in $p$. Then by Lemma \ref{lma:NEExistance} SNE exist. We proceed to calculate the SNE by a direct computation.

\noindent
Differentiating the utility function with respect to $p^\prime$, equating to zero, and simplifying, we get
$p^\prime={\exp\{p\lambda \overline C\}}/\sqrt\rho$,
where $\overline{C}=C/2.$ This equation gives the best response of the tagged node when all other nodes use MAP $p$. If there exists a $p \in [0\;1]$ such that
$p^\prime=p$, then $p$ is the SNE of the game. Hence we look for the conditions when the following fixed point relation holds

\begin{equation}
\label{eqn:DelayUtilityFixedPoint}
p=\frac{\exp\{p\lambda \overline{C}\}}{\sqrt\rho}.
\end{equation}

\begin{lemma}
\label{lma:LambertCondition}
If  $\lambda \overline{C}e > \sqrt \rho$, $p^*=1$ is the unique SNE. If $\lambda \overline{C}e\leq \sqrt \rho$, the SNE is
\begin{equation}
p^*=\min\left \{\frac{-1}{\lambda \overline{C}}\mathbb{W}\left(-\frac{\lambda \overline{C}}{\sqrt \rho}\right),1\right \},
\end{equation}
where $e$ denotes the base of the natural logarithm and  $\mathbb{W}(\cdot)$ denotes the Lambert function \cite{LambertFunction}.
\end{lemma}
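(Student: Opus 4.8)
The plan is to reduce the equilibrium problem to the scalar fixed-point equation (\ref{eqn:DelayUtilityFixedPoint}) already derived in the text, and then to solve and analyze it via the Lambert function. Writing $a:=\lambda\overline{C}$ for brevity, I would first confirm that for fixed $p$ the map $p'\mapsto U(p',p)$ is strictly concave on $(0,1]$, since its second derivative is $-2e^{p\lambda C}/p'^{3}<0$, and that it tends to $-\infty$ as $p'\to 0^{+}$, so the maximizer is never at $0$. Hence the constrained best response is $BR(p)=\min\{e^{pa}/\sqrt\rho,\,1\}$, the inner term being the unique stationary point coming from the first-order condition. A symmetric equilibrium is exactly a fixed point $p^{*}=BR(p^{*})$, so I look for solutions of $p=e^{pa}/\sqrt\rho$ and then correct for the cap at $1$.

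The second step is the algebraic reduction to Lambert form. Substituting $z=-pa$ into $p=e^{pa}/\sqrt\rho$ turns it into $z e^{z}=-a/\sqrt\rho$, whose solutions are $z=\mathbb{W}(-a/\sqrt\rho)$; undoing the substitution gives $p=-\tfrac{1}{a}\mathbb{W}(-a/\sqrt\rho)$, which is the stated formula. Existence of a real solution is then read off from the domain of the Lambert function: $\mathbb{W}(x)$ is real iff $x\ge -1/e$, and here $x=-a/\sqrt\rho<0$, so a real fixed point exists iff $-a/\sqrt\rho\ge -1/e$, i.e. iff $ae\le\sqrt\rho$. This is precisely the dichotomy in the statement.

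For the regime $ae>\sqrt\rho$ I would argue directly that no interior equilibrium exists and the cap binds. Set $h(p):=e^{pa}/\sqrt\rho-p$; it is convex with $h(0)=1/\sqrt\rho>0$, and its unique minimizer $p_{\min}=\tfrac1a\log(\sqrt\rho/a)$ gives $h(p_{\min})=\tfrac1a\bigl(1-\log(\sqrt\rho/a)\bigr)$, which is strictly positive precisely when $ae>\sqrt\rho$. Hence $e^{pa}/\sqrt\rho>p$ for every $p$, so in particular $e^{a}/\sqrt\rho>1$, giving $BR(1)=1$; and for any $p<1$ one has $BR(p)=\min\{e^{pa}/\sqrt\rho,1\}>p$, so no $p<1$ can be a fixed point, and $p^{*}=1$ is the unique SNE. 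In the complementary regime $ae\le\sqrt\rho$, $h$ dips to $h(p_{\min})\le 0$, so $h$ has (one or two) real roots; the principal branch $\mathbb{W}_{0}\ge -1$ selects the smaller root $p_{1}=-\tfrac1a\mathbb{W}(-a/\sqrt\rho)\in(0,1/a]$, where $g'(p_{1})=ap_{1}\le 1$, i.e. the stable fixed point at which $h$ crosses from positive to negative. Taking $\min\{p_{1},1\}$ then yields the formula: if $p_{1}\le 1$ it is the interior SNE, while if $p_{1}>1$ (possible only when $a<1$) then $h>0$ throughout $[0,1]$, the cap binds again, and $p^{*}=1$.

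The step I expect to be the main obstacle is the boundary bookkeeping: translating either the non-existence or the super-unit value of the unconstrained fixed point into the corner equilibrium $p^{*}=1$, and justifying the selection of the principal branch as the dynamically relevant (stable) root. The convexity of $h$ together with the sign of $h(p_{\min})$ does the real work, and identifying that sign with the condition $ae\lessgtr\sqrt\rho$ is the crux linking the two cases.
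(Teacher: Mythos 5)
Your proof is correct and takes essentially the same route as the paper's: reduce the symmetric equilibrium to the fixed point equation $p=e^{p\lambda\overline{C}}/\sqrt\rho$, rewrite it as $-p\lambda\overline{C}\,e^{-p\lambda\overline{C}}=-\lambda\overline{C}/\sqrt\rho$, and read off the solution and the existence dichotomy from the Lambert function's real domain $[-1/e,\,0)$, with the corner case handled by the elementary inequality $e^{x}\geq xe$ (which is exactly your computation $h(p_{\min})>0$ in disguise). If anything, your version is more careful than the paper's terse argument, since you also verify strict concavity of the best response problem, rule out $p'=0$, and explicitly treat the cap $p^{*}=1$ when the principal-branch root $-\mathbb{W}_{0}\left(-\lambda\overline{C}/\sqrt\rho\right)/\lambda\overline{C}$ exceeds $1$ (possible only for $\lambda\overline{C}<1$), a bookkeeping step the paper leaves implicit.
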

\begin{proof}
By using the relation $e^x \geq xe$ for all $x \geq 0$, it is easy to see that $p^*=1$ is the SNE when $\lambda \overline{C}e > \sqrt \rho$. Assume $\lambda \overline{C}e \geq \sqrt \rho$; rearranging the terms in (\ref{eqn:DelayUtilityFixedPoint}), we can write
$-p\lambda \overline{C}\exp\{-p\lambda \overline{C}\}=\frac{-\lambda \overline{C}}{\sqrt \rho}$.
Using the definition of the Lambert function \cite{LambertFunction}, we obtain
\begin{equation}
\label{eqn:LambertSolution}
p=\frac{-1}{\lambda \overline{C}}\mathbb{W}\left(-\frac{\lambda \overline{C}}{\sqrt \rho}\right).
\end{equation}
\end{proof}
\noindent
Let us briefly recall the properties of the Lambert function in the footnote below \footnote
{\begin{itemize}
\item The Lambert function is two-valued in the interval $[-1/e,\;0]$. The two branches of the Lambert function denoted as $\mathbb{W}_0(\cdot)$ and $\mathbb{W}_{-1}(\cdot)$ meet at $-1/e$ and the take value $-1$ at this point.
\item In the interval $[-1/e, \; 0]\;$  $\mathbb{W}_0(\cdot)$ is a continuous and increasing function taking value in $[-1, \; -\infty]$.
\item In the interval $[-1/e, \; 0]\;$  $\mathbb{W}_{-1}(\cdot)$ is a continuous and  decreasing function taking value in $[-1 ,\; \infty]$.
\end{itemize}
}.
With this explicit expression for the SNE we can characterize the effect of the price factor on the SNE. For some values of $\rho$, the resulting SNE is not unique as shown in the following lemma. For notational simplicity we write $\rho_t=(e\lambda \overline{C})^2$.

\begin{lemma}
\label{lma:DealyTwoEquilibriumCondn}
Assume $\lambda \overline{C} \geq 1$; 
then there exists a $\rho_{-1}\geq \rho_t $ such that for all $\rho$ satisfying $\rho_{-1}\geq \rho \geq \rho_t$ two SNE are possible. One resulting from the $\mathbb{W}_0$ and the other from the $\mathbb{W}_{-1}$ branch of the Lambert function.
\end{lemma}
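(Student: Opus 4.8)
The plan is to identify symmetric Nash equilibria with the admissible roots $p \in [0,1]$ of the fixed-point equation (\ref{eqn:DelayUtilityFixedPoint}), and then simply to count how many of the two Lambert-branch solutions furnished by Lemma \ref{lma:LambertCondition} actually fall inside $[0,1]$ as $\rho$ varies. First I would record the elementary fact that the Lambert function is genuinely two-valued exactly when its argument lies in $[-1/e,0)$. Since the argument here is $-\lambda\overline{C}/\sqrt\rho$, this occurs precisely when $\lambda\overline{C}/\sqrt\rho \le 1/e$, i.e. $\rho \ge (e\lambda\overline{C})^2 = \rho_t$. Hence for every $\rho \ge \rho_t$ both branches are defined and yield two candidate equilibria, $p_0 = -\mathbb{W}_0(-\lambda\overline{C}/\sqrt\rho)/(\lambda\overline{C})$ and $p_{-1} = -\mathbb{W}_{-1}(-\lambda\overline{C}/\sqrt\rho)/(\lambda\overline{C})$.

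Next I would use the branch ranges recalled in the footnote to settle admissibility. On $[-1/e,0)$ one has $\mathbb{W}_0 \in [-1,0)$ and $\mathbb{W}_{-1} \in (-\infty,-1]$, so $p_0 \in (0,\,1/(\lambda\overline{C})]$ while $p_{-1} \in [1/(\lambda\overline{C}),\,\infty)$. This is where the hypothesis $\lambda\overline{C}\ge 1$ does its work: it forces $1/(\lambda\overline{C})\le 1$, so $p_0$ lies in $[0,1]$ for \emph{every} $\rho\ge\rho_t$ and is always an admissible SNE. The $\mathbb{W}_{-1}$ root, by contrast, equals $1/(\lambda\overline{C})$ at $\rho=\rho_t$ and increases with $\rho$ (as the argument rises toward $0^-$, $\mathbb{W}_{-1}\to-\infty$), so it is admissible only until the value of $\rho$ at which $p_{-1}=1$.

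I would then pin down that threshold $\rho_{-1}$ directly, by substituting $p=1$ into (\ref{eqn:DelayUtilityFixedPoint}), which gives $\sqrt{\rho_{-1}}=e^{\lambda\overline{C}}$, i.e. $\rho_{-1}=e^{2\lambda\overline{C}}$. By the monotonicity just noted, $p_{-1}\le 1 \iff \rho\le\rho_{-1}$. It remains to check $\rho_{-1}\ge\rho_t$: taking logarithms this reads $2\lambda\overline{C}\ge 2+2\ln(\lambda\overline{C})$, i.e. $\lambda\overline{C}-1\ge\ln(\lambda\overline{C})$, which is the standard inequality $\ln x\le x-1$ (with equality only at $\lambda\overline{C}=1$, where the interval degenerates to the single point $\rho_t=\rho_{-1}$). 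Combining these facts, for every $\rho\in[\rho_t,\rho_{-1}]$ both $p_0$ and $p_{-1}$ lie in $[0,1]$ and solve (\ref{eqn:DelayUtilityFixedPoint}), hence both are SNE; they are distinct whenever $\rho>\rho_t$, since the two branches separate strictly once the argument exceeds $-1/e$.

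The part I expect to require the most care is the admissibility bookkeeping rather than any computation. Concretely, one must read off the branch ranges on $[-1/e,0)$ correctly so that the monotonicity of $p_{-1}$ in $\rho$ yields the clean equivalence $p_{-1}\le 1 \iff \rho\le\rho_{-1}$, and one must see that the assumption $\lambda\overline{C}\ge 1$ is exactly what keeps the $\mathbb{W}_0$ root at or below $1$ without invoking the cap in Lemma \ref{lma:LambertCondition}. Everything else reduces to the substitution $p=1$ in the fixed-point equation and the convexity inequality $\ln x\le x-1$.
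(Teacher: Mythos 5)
Your proof is correct and follows essentially the same route as the paper's: both identify the SNE with the roots of the fixed-point equation (\ref{eqn:DelayUtilityFixedPoint}) lying in $[0,1]$ on the two Lambert branches, use the monotonicity of $\mathbb{W}_0$ and $\mathbb{W}_{-1}$ on $[-1/e,0)$ as $\rho$ ranges over $[\rho_t,\infty)$, and observe that $\lambda\overline{C}\ge 1$ keeps the $\mathbb{W}_0$ root admissible for all such $\rho$ while the $\mathbb{W}_{-1}$ root stays admissible exactly until it reaches $1$. The only difference is a welcome sharpening: where the paper defines $\rho_{-1}$ implicitly by $\mathbb{W}_{-1}\left(-\lambda\overline{C}/\sqrt{\rho_{-1}}\right)=-\lambda\overline{C}$ and argues existence via the intermediate-value property of the decreasing branch, you solve the threshold in closed form by substituting $p=1$, obtaining $\rho_{-1}=e^{2\lambda\overline{C}}$, and explicitly verify $\rho_{-1}\ge\rho_t$ through $\ln x\le x-1$ (with the degenerate single-point interval at $\lambda\overline{C}=1$), a check the paper leaves tacit.
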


\noindent

Note that for for values of  $\rho \in [\rho_t, \; \rho_{-1}] $, the equilibrium MAP computed on the $\mathbb{W}_{-1}$ is larger than that computed on the $\mathbb{W}_0$ branch. If nodes reach the equilibrium that is resulting from the $\mathbb{W}_{-1}$ branch, then they will be transmitting aggressively at equilibrium. Later we will see that this leads to inefficiency (\ref{rmk:BadEquilibrium}).

\begin{lemma}
\label{lma:DelayUniqueNE}
Assume $\lambda \overline{C} < 1$, there exists a $\rho_0 \geq \rho_t$ such that for all $\rho$  satisfying  $\rho_0 \geq \rho \geq \rho_t$, the SNE is $p^*=1$. For all $\rho$ satisfying $\rho \geq \rho_0$ the SNE is unique and lies on the $\mathbb{W}_0$ branch  of the Lambert function.
\end{lemma}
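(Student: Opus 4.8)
The plan is to recast the fixed-point condition (\ref{eqn:DelayUtilityFixedPoint}) as a constrained best-response analysis and then exploit the two branches of the Lambert function together with the hypothesis $\lambda\overline{C}<1$. Since the per-node utility (\ref{eqn:DelayUtility}) is concave in $p^\prime$ with unconstrained maximizer $h(p):=\exp\{p\lambda\overline{C}\}/\sqrt\rho$, the best response on $[0,1]$ is $\mathrm{BR}(p)=\min\{h(p),1\}$, and a SNE is exactly a fixed point $p^*=\mathrm{BR}(p^*)$. I would first observe that $h$ is strictly convex and increasing with $h(0)=1/\sqrt\rho>0$, so the interior equation $h(p)=p$ has zero, one, or two roots. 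The tangency computation $h(p)=p$, $h^\prime(p)=\lambda\overline{C}\,h(p)=1$ forces $p=1/(\lambda\overline{C})$ and $\rho=\rho_t$, so for $\rho>\rho_t$ there are exactly two roots: the smaller one $p_0=\frac{-1}{\lambda\overline{C}}\mathbb{W}_0(-\lambda\overline{C}/\sqrt\rho)\le 1/(\lambda\overline{C})$ and the larger one $p_{-1}=\frac{-1}{\lambda\overline{C}}\mathbb{W}_{-1}(-\lambda\overline{C}/\sqrt\rho)\ge 1/(\lambda\overline{C})$, arising from the two Lambert branches exactly as in Lemma \ref{lma:LambertCondition}.

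The decisive use of the hypothesis comes next. Since $\lambda\overline{C}<1$ we have $1/(\lambda\overline{C})>1$, hence $p_{-1}\ge 1/(\lambda\overline{C})>1$ for every admissible $\rho$. Thus the $\mathbb{W}_{-1}$ root never lies in $(0,1)$ and can never be an interior SNE; this is precisely what kills the second equilibrium that appeared in Lemma \ref{lma:DealyTwoEquilibriumCondn} and is what delivers uniqueness. It then remains only to track the smaller root $p_0$ and the boundary point $p=1$ as functions of $\rho$.

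I would set $\rho_0:=\exp\{2\lambda\overline{C}\}$, the unique value at which $h(1)=1$, equivalently at which $p_0=1$. Because $\mathbb{W}_0$ is increasing and $-\lambda\overline{C}/\sqrt\rho$ increases toward $0$ as $\rho$ grows, $p_0$ is strictly decreasing in $\rho$, falling from $p_0=1/(\lambda\overline{C})$ at $\rho=\rho_t$ through $p_0=1$ at $\rho=\rho_0$ toward $0$. Hence for $\rho_t\le\rho\le\rho_0$ one has $p_0\ge 1$, so no admissible interior root exists, while $h(1)\ge 1$ makes $p^*=1$ a fixed point of $\mathrm{BR}$; since $\mathrm{BR}(0)=\min\{1/\sqrt\rho,1\}>0$ rules out $p=0$, the point $p^*=1$ is the unique SNE. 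For $\rho>\rho_0$ one has $h(1)<1$, so $p=1$ is no longer fixed, whereas $p_0<1$ is an admissible interior fixed point lying on the $\mathbb{W}_0$ branch and is the only one. Finally I would record $\rho_0\ge\rho_t$, which reduces to the elementary inequality $e^{x-1}\ge x$ at $x=\lambda\overline{C}$ and is strict here because $\lambda\overline{C}\ne 1$.

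The main obstacle I anticipate is not the algebra but the bookkeeping of the constrained best response: one must argue carefully that capping at $1$ creates no spurious equilibria and that the only genuine fixed points are the boundary point $p=1$ (present exactly when $h(1)\ge 1$) and the admissible interior root $p_0$ (present exactly when $\rho>\rho_0$), so that these two regimes are complementary and each contains a single SNE. The strict monotonicity of $p_0$ in $\rho$ and the sign analysis of $h(1)-1$ are what make this dichotomy clean and the equilibrium unique.
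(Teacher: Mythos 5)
Your proof is correct, and at its core it travels the same route as the paper's (very terse) argument: both hinge on the fact that $-\mathbb{W}_0\left(-\lambda\overline{C}/\sqrt{\rho}\right)$ decreases in $\rho$, so that the candidate equilibrium $p_0=-\mathbb{W}_0\left(-\lambda\overline{C}/\sqrt{\rho}\right)/(\lambda\overline{C})$ falls from $1/(\lambda\overline{C})>1$ at $\rho=\rho_t$ through $1$ at a threshold $\rho_0$ defined by $-\mathbb{W}_0\left(-\lambda\overline{C}/\sqrt{\rho_0}\right)=\lambda\overline{C}$. The paper stops essentially there, leaving $\rho_0$ implicit and asserting the conclusion; you go further in three ways that genuinely strengthen the argument. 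First, you solve the threshold equation explicitly, $\rho_0=\exp\{2\lambda\overline{C}\}$, and verify $\rho_0\geq\rho_t$ via $e^{x-1}\geq x$ at $x=\lambda\overline{C}$, whereas the paper only claims existence of such a $\rho_0$. Second, and most substantively, you actually prove the uniqueness claim: since $\mathbb{W}_{-1}$ takes values in $(-\infty,-1]$, the second root satisfies $p_{-1}\geq 1/(\lambda\overline{C})>1$ for every admissible $\rho$ when $\lambda\overline{C}<1$, so the $\mathbb{W}_{-1}$ branch can never produce an equilibrium in $[0,1]$ --- this is the step that distinguishes the present lemma from Lemma \ref{lma:DealyTwoEquilibriumCondn}, and the paper's proof never states it. Third, your constrained best-response formulation $\mathrm{BR}(p)=\min\{h(p),1\}$ with $h(p)=\exp\{p\lambda\overline{C}\}/\sqrt{\rho}$ cleanly establishes that $p^*=1$ is an equilibrium exactly when $h(1)\geq 1$ (i.e., $\rho\leq\rho_0$) and ceases to be one for $\rho>\rho_0$, and that $p=0$ is never a fixed point; the paper takes all of this boundary bookkeeping for granted. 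In short: same skeleton, but your version supplies the closed form for $\rho_0$ and the missing uniqueness and boundary arguments, at the modest cost of the tangency computation at $\rho=\rho_t$ (which matches the characterization in Lemma \ref{lma:LambertCondition} and costs little).
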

Lemma \ref{lma:LambertCondition} together with \ref{lma:DealyTwoEquilibriumCondn} and \ref{lma:DelayUniqueNE} completely characterize all possible equilibria. We summarize all the SNE resutls in Table \ref{tab:DelaySNE}. Figures \ref{fig:DelayLambert1} and \ref{fig:DelayLambert2} illustrate the Lambert function and its properties used in the proof of Lemma \ref{lma:DealyTwoEquilibriumCondn} and \ref{lma:DelayUniqueNE}. In Figure \ref{fig:DelayLambert1} the $y$ axis marked with a double arrow denotes the region in which two equilibrium points occur. \\
\noindent
\textbf{Stability of SNE:}
In Figure \ref{fig:Best Response} we plot the best response of the tagged node and that of all nodes against the tagged node. This example demonstrates the existence of two SNE. The smaller SNE among the two results from the principal branch of the Lambert function in (\ref{eqn:LambertSolution}), and the larger one from the $\mathbb{W}_{-1}$ branch. When two equilibrium points exist, we denote the SNE resulting from the $W_0$ branch of the Lambert function as $p_0^*$ and that resulting from the $\mathbb{W}_{-1}$ branch as $p_{-1}^*$. When we do not need to make this distinction or it is unique we write the SNE as $p^*$. From Figure \ref{fig:Best Response} we see that, at the equilibrium point $p_0^*$, a slight increase in the MAP $p$ results in a decrease in $p^\prime$. This is a stabilizing behavior and we conclude that $p_0^*$ is stable. In contrast, at the equilibrium point $p_{-1}^*$ a slight increase in $p$ is seen to cause an increase in the value of $p^\prime$. Thus the second equilibrium point is unstable.\\
\noindent
\textbf{Optimal Pricing:}
Assume that $p^*$ is an equilibrium point for a given value of $\rho$ that satisfies the conditions in Lemma \ref{lma:DealyTwoEquilibriumCondn} or \ref{lma:DelayUniqueNE} for a given value of $\lambda$ and $\overline C$. The potential delay experienced by the packets of a tagged node at equilibrium can be computed as

\begin{eqnarray}
\nonumber
\lefteqn {t(p^*,p^*)=\frac{1}{p^*\exp\{-p^* \lambda C\}}} \\
\label{eqn:DelaySplitC}
&=& \frac{1}{p^*\exp\{-p^* \lambda \overline C \} \exp\{-p^* \lambda \overline C\}}\\
\label{eqn:DealyUSeFixedPointDefn}
&=& \frac{\sqrt \rho}{\exp\left\{\mathbb{W}\left(-\lambda\overline C / \sqrt \rho \right)\right\}}\\
\label{eqn:DelayBringStdLambertFromat}
&=& \frac{\sqrt \rho \mathbb{W}\left(-\lambda\overline C / \sqrt \rho \right)}{\mathbb{W}\left(-\lambda\overline C / \sqrt \rho \right)\exp\left\{\mathbb{W}\left(-\lambda\overline C / \sqrt \rho \right)\right\}}\\
\label{eqn:DealyAtEquilibrium}
&=& \frac{\sqrt \rho \mathbb{W}\left(-\lambda\overline C / \sqrt \rho \right)}{(-\lambda\overline C / \sqrt \rho)}=-\frac{ \rho \mathbb{W}\left(-\lambda\overline C / \sqrt \rho \right)}{\lambda\overline C}
\end{eqnarray}
In above chain of equalities the relation $C=2 \overline C$ is used in (\ref{eqn:DelaySplitC}). (\ref{eqn:DealyUSeFixedPointDefn}) follows from (\ref{eqn:DelayUtilityFixedPoint}). Equation (\ref{eqn:DealyAtEquilibrium}) follows by applying the definition of the Lambert function to the denominator term in (\ref{eqn:DelayBringStdLambertFromat}) and rearranging.
From \ref{eqn:DealyAtEquilibrium} we see that when the SNE is not unique when the Lambert function takes two values for some price factors. Delay experienced by the tagged node is larger when the SNE results from the $\mathbb{W}_{-1}$ branch of the Lambert function.

With the expression for the delay of a tagged node at equilibrium, one can look for the value of the price factor that minimizes the delay experienced by each node at equilibrium. We assume that the objective of the central agent
is to minimize the average delay per unit area, i.e., spatial delay density, at equilibrium.  The optimization problem of the central agent is:
\begin{equation}
\label{eqn:DelayOptimization}
\begin{aligned}
& \underset{\rho}{\text{minimize}}
& & -\frac{ \rho \mathbb{W}\left(-\lambda\overline C / \sqrt \rho \right)}{ \overline C}
& && \text{subject to}
&&&& \hspace{-.4cm} \sqrt\rho \geq \lambda \overline C e
&&&&&  \hspace{-.2cm}\text{and}
&&&&&& \hspace{-.4cm} -\mathbb{W}\left(-\lambda\overline C / \sqrt \rho \right) \leq \lambda \overline C .
\end{aligned}
\end{equation}
The first constraint in this optimization problem results from Lemma \ref{lma:LambertCondition} and the second constraint is required to ensure that the resulting value of $p$ in ($\ref{eqn:DelayUtilityFixedPoint}$) lies in the interval $[0\; 1].$

Let $h(\rho):=-\rho \mathbb{W}\left(-\lambda\overline C / \sqrt \rho \right)$ denote the objective function in the above optimization problem without the multiplicative factor $\overline C$. $h(\rho)$ is defined for $\sqrt \rho \geq e \lambda \overline C$. In the following lemma we state some of its properties.

\begin{lemma}
\label{lma:DelayObjectiveProperties}
On the principal branch $\mathbb{W}_0$, $h(\rho)$ is a quasi convex function in $\rho$ and the global minimum is attained at $\rho^*=4 e (\lambda \overline C)^2$. On the $\mathbb{W}_{-1}$ branch $h(\rho)$ is a monotonically increasing function taking value $(e \lambda \overline C)^2$ at $\rho=(e \lambda \overline C)^2$.
\end{lemma}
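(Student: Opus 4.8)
The plan is to eliminate $\rho$ in favor of the Lambert value itself, which turns the unwieldy composition $-\rho\,\mathbb{W}(-\lambda\overline{C}/\sqrt{\rho})$ into a simple one-variable function whose monotonicity can be read off directly. Write $a:=\lambda\overline{C}$ and set $w:=\mathbb{W}(-a/\sqrt{\rho})$, so that the defining relation $we^{w}=-a/\sqrt{\rho}$ gives $\rho=a^{2}w^{-2}e^{-2w}$ and hence $h(\rho)=-\rho\,w=-a^{2}w^{-1}e^{-2w}=:H(w)$. Since the admissible range $\sqrt{\rho}\geq ea$ corresponds to argument $-a/\sqrt{\rho}\in[-1/e,0)$, the variable $w$ ranges over $[-1,0)$ on the principal branch $\mathbb{W}_0$ and over $(-\infty,-1]$ on the $\mathbb{W}_{-1}$ branch, the two meeting at $w=-1$, which is the point $\rho=(ea)^{2}$.

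First I would differentiate the parametrized function, obtaining $H'(w)=a^{2}e^{-2w}w^{-2}(1+2w)$, so the sign of $H'(w)$ is exactly the sign of $1+2w$: negative for $w<-1/2$ and positive for $w>-1/2$. Next I would record how $w$ moves as $\rho$ increases. Writing $\rho=a^{2}\phi(w)$ with $\phi(w)=w^{-2}e^{-2w}$, a short computation gives $\phi'(w)=-2e^{-2w}w^{-3}(1+w)$, whose sign for $w<0$ is that of $1+w$; thus $\rho$ is increasing in $w$ on $\mathbb{W}_0$ (where $w>-1$) and decreasing in $w$ on $\mathbb{W}_{-1}$ (where $w<-1$). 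Equivalently, $w$ is increasing in $\rho$ along $\mathbb{W}_0$ and decreasing in $\rho$ along $\mathbb{W}_{-1}$, which avoids any appeal to the explicit derivative of $\mathbb{W}$.

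Combining these two facts via the chain rule $\frac{dh}{d\rho}=H'(w)\frac{dw}{d\rho}$ gives the statement. On $\mathbb{W}_0$, $\frac{dw}{d\rho}>0$, so $h$ decreases while $w\in(-1,-1/2)$ and increases while $w\in(-1/2,0)$; a function that is first non-increasing then non-decreasing on an interval has interval sublevel sets and is therefore quasi-convex, with its unique minimum at $w=-1/2$. Substituting $w=-1/2$ into $\rho=a^{2}w^{-2}e^{-2w}$ yields $\rho^{*}=4ea^{2}=4e(\lambda\overline{C})^{2}$, as claimed. On $\mathbb{W}_{-1}$ one has $w\leq-1<-1/2$, so $H'(w)<0$, while $\frac{dw}{d\rho}<0$; the two negatives make $\frac{dh}{d\rho}>0$, so $h$ is monotonically increasing there. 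Finally, evaluating $H(-1)=-a^{2}(-1)^{-1}e^{2}=a^{2}e^{2}=(e\lambda\overline{C})^{2}$ at the branch point $w=-1$, i.e. at $\rho=(e\lambda\overline{C})^{2}$, gives the stated boundary value.

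The only genuinely delicate point is the bookkeeping of signs in the last step: one must combine the sign of $H'(w)$ with the direction in which $w$ traverses each branch as $\rho$ grows, and it is easy to drop a sign precisely because the $\mathbb{W}_{-1}$ branch is itself decreasing in $\rho$. Carrying the substitution $\rho=a^{2}w^{-2}e^{-2w}$ throughout, rather than differentiating $\mathbb{W}$ directly, keeps these signs transparent; the quasi-convexity claim then needs only the elementary observation that a unimodal decrease-then-increase function is quasi-convex, with no second-derivative computation required.
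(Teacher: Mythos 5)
Your proof is correct, and it takes a genuinely different route from the paper's. The paper differentiates $h(\rho)$ directly in $\rho$, quoting the formula $\mathbb{W}^\prime(x)=\mathbb{W}(x)/(x(1+\mathbb{W}(x)))$ and reading the sign of $h^\prime(\rho)$ off the factor $1-\tfrac{1}{2(1+\mathbb{W})}$, which vanishes where $\mathbb{W}=-1/2$; quasi-convexity then follows from the decrease-then-increase pattern, and the $\mathbb{W}_{-1}$ claim is dispatched in one line by noting that $h(\rho)=\rho\cdot\bigl(-\mathbb{W}(-\lambda\overline C/\sqrt\rho)\bigr)$ is there a product of positive increasing factors. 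You instead parametrize by the Lambert value itself, writing $a=\lambda\overline C$, $\rho=a^2w^{-2}e^{-2w}$ and $h=H(w)=-a^2w^{-1}e^{-2w}$, reducing everything to the signs of $1+2w$ (for $H^\prime$) and $1+w$ (for $d\rho/dw$); these agree with the paper's computation, since its parenthetical factor equals $(2w+1)/(2(1+w))$, positive on $\mathbb{W}_0$ exactly when $2w+1>0$. Your substitution buys two things: it avoids invoking the derivative of $\mathbb{W}$ altogether, deriving the branch monotonicities from the elementary computation $\phi^\prime(w)=-2e^{-2w}w^{-3}(1+w)$, and it treats both branches with a single computation where the paper needs a separate remark for $\mathbb{W}_{-1}$. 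It also quietly sidesteps the fact that the cited derivative formula excludes $x=-1/e$, i.e.\ the branch point $w=-1$, $\rho=(ea)^2$, where your $H$ and $\phi$ remain smooth. The cost is the inverse-function sign bookkeeping you flag yourself (on $\mathbb{W}_{-1}$, $w$ decreases as $\rho$ increases), which you carry out correctly; your evaluations $\rho^*=4ea^2$ at $w=-1/2$ and $H(-1)=(ea)^2$ at $\rho=(ea)^2$ match the lemma, and the quasi-convexity conclusion from unimodality is the same elementary fact the paper cites.
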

Consider the optimization problem in (\ref{eqn:DelayOptimization}) on the $\mathbb{W}_0$ branch of the Lambert function. The value of $\rho^*$ at which $h(\rho)$ achieves minima satisfies the first condition in (\ref{eqn:DelayOptimization}) as ${\rho^*}=4e(\lambda \overline C)^2 \geq e^2(\lambda \overline C)^2.$ The value of $p$ in equation (\ref{eqn:DelayUtilityFixedPoint}) at $\rho=\rho^*$ is
$1/(2\lambda \overline C)$. This implies that whenever $2\lambda \overline C= \lambda C >1$, the resulting value of $p$ lies inside $(0\;1),$ thus satisfying the second condition of the optimization problem. Hence under the assumption $\lambda C >1$, the global minimizer of $h(\rho)$ lies in the constraint set of $(\ref{eqn:DelayOptimization})$ and the value of the objective function at this point is
\begin{equation}
\label{eqn:DelayAtEquilibriumOptimal}
-\frac{\rho^* \mathbb{W}(-\lambda\overline C /\sqrt {\rho^*})}{\lambda\overline C}=-\frac{(4e(\lambda\overline C)^2)(-1/2)}{\lambda\overline C}=e\lambda C.
\end{equation}
When $\lambda C \leq 1$, any value of $\rho$ such that $\rho < \rho^*$ violates the second condition, as $-\mathbb{W}(-\lambda\overline C /\sqrt {\rho})>1/2 $. From Lemma \ref{lma:DelayObjectiveProperties}, the minimum value of $h(\rho)$ is achieved at a $\rho$ satisfying $-\mathbb{W}(-\lambda \overline C/\sqrt \rho)=\lambda \overline C$. This implies that under the condition $\lambda C \leq 1$ the value of the SNE is $p^*=1$ and the delay experienced by each node is given by $\exp\{-\lambda C\}$. We summarize these observations in the following result.

\begin{thm}
The value of the price factor $\rho^*$ that minimizes the delay at equilibrium is as follows:
\begin{equation}
\label{eqn:DelayOptimalRhoValues}
\rho^*=\left\{
  \begin{array}{ll}
    (2\lambda \overline C\sqrt e)^2  & \hbox{\;\;if \;\;} \lambda C >1\\
    -2\mathbb{W}(-\lambda \overline C/\sqrt {\rho^*})=\lambda C, & \hbox{\;\;if \;\;} \lambda C \leq 1
  \end{array}
\right.
\end{equation}
and the corresponding delay at equilibrium is
\begin{equation}
\label{eqn:DelayAtEquilibriumOptimalWithOptRho}
d_t(p^*,p^*)=\left\{
  \begin{array}{ll}
    \lambda^2 e C  & \hbox{\;\;if \;\;} \lambda C >1\\
    \lambda \exp\{\lambda C\}, & \hbox{\;\;if \;\;} \lambda C \leq 1.
  \end{array}
\right.
\end{equation}
\end{thm}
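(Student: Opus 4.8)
The plan is to read the minimizer directly off the structure established in Lemma~\ref{lma:DelayObjectiveProperties} together with the feasible region of the optimization problem~(\ref{eqn:DelayOptimization}), handling the two regimes $\lambda C>1$ and $\lambda C\le 1$ separately. The starting point is that the spatial delay density at the stable equilibrium equals $\lambda\,t(p^*,p^*)=h(\rho)/\overline C$, with $h(\rho)=-\rho\,\mathbb{W}_0(-\lambda\overline C/\sqrt\rho)$ by (\ref{eqn:DealyAtEquilibrium}), so minimizing the delay is exactly (\ref{eqn:DelayOptimization}); I restrict to the $\mathbb{W}_0$ branch since the stability analysis identifies $p_0^*$ as the stable equilibrium. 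Lemma~\ref{lma:DelayObjectiveProperties} tells me $h$ is quasi-convex on $\mathbb{W}_0$ with a unique global minimizer at $\rho^{**}=4e(\lambda\overline C)^2=(2\lambda\overline C\sqrt e)^2$, and a short computation through the fixed point (\ref{eqn:DelayUtilityFixedPoint})--(\ref{eqn:LambertSolution}) shows that at $\rho^{**}$ one has $\mathbb{W}_0(-\lambda\overline C/\sqrt{\rho^{**}})=\mathbb{W}_0(-1/(2\sqrt e))=-1/2$, i.e.\ the associated MAP is $p=1/(2\lambda\overline C)$.

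First I would treat $\lambda C>1$, where I must verify that the unconstrained minimizer $\rho^{**}$ is feasible for (\ref{eqn:DelayOptimization}). The first constraint $\sqrt{\rho^{**}}\ge e\lambda\overline C$ holds because $4e\ge e^2$; the second constraint $-\mathbb{W}_0\le\lambda\overline C$ reduces, via $-\mathbb{W}_0=1/2$, to $1/2\le\lambda\overline C$, i.e.\ $\lambda C\ge 1$, which is precisely the hypothesis. Hence $\rho^*=\rho^{**}=(2\lambda\overline C\sqrt e)^2$, and substituting $\mathbb{W}_0=-1/2$ into (\ref{eqn:DealyAtEquilibrium}) and multiplying by $\lambda$ gives $d_t(p^*,p^*)=-\rho^*\mathbb{W}_0/\overline C=2e\lambda^2\overline C=\lambda^2 eC$, using $2\overline C=C$.

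For $\lambda C\le 1$ the point $\rho^{**}$ is no longer feasible, since $-\mathbb{W}_0=1/2>\lambda\overline C$ forces $p=1/(2\lambda\overline C)>1$. The key fact I would establish is that $-\mathbb{W}_0(-\lambda\overline C/\sqrt\rho)$ is strictly decreasing in $\rho$ (the argument $-\lambda\overline C/\sqrt\rho$ increases toward $0^-$ and $\mathbb{W}_0$ is increasing), so the second constraint restricts the feasible set to $\{\rho\ge\rho_0\}$, where $\rho_0$ is the unique solution of $-2\mathbb{W}_0(-\lambda\overline C/\sqrt{\rho_0})=\lambda C$. Since $\rho_0\ge\rho^{**}$ and $h$ is quasi-convex with its minimum at $\rho^{**}$, $h$ is non-decreasing on $(\rho^{**},\infty)$, so the constrained minimum is attained at the boundary $\rho^*=\rho_0$. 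There $p^*=-\mathbb{W}_0/(\lambda\overline C)=1$, whence $t(1,1)=\exp\{\lambda C\}$ and $d_t(p^*,p^*)=\lambda\exp\{\lambda C\}$.

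The main obstacle is the regime $\lambda C\le 1$: one must argue that the constrained optimum sits at the edge of the feasible region rather than at an interior stationary point, which requires both the monotonicity of $-\mathbb{W}_0(-\lambda\overline C/\sqrt\rho)$ (to pin the feasible interval as $\{\rho\ge\rho_0\}$) and the quasi-convexity of $h$ from Lemma~\ref{lma:DelayObjectiveProperties} (to guarantee $h$ rises past $\rho^{**}$). I would also confirm that the $\mathbb{W}_{-1}$ branch cannot yield a smaller delay: by Lemma~\ref{lma:DelayObjectiveProperties} it is increasing with minimum value $(e\lambda\overline C)^2$, giving delay $e^2\lambda^2\overline C>2e\lambda^2\overline C$, so the stable $\mathbb{W}_0$ branch indeed dominates and the optimum is as claimed.
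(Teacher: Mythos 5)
Your proof is correct and follows essentially the same route as the paper: it invokes the quasi-convexity and the minimizer $4e(\lambda\overline C)^2=(2\lambda\overline C\sqrt e)^2$ from Lemma~\ref{lma:DelayObjectiveProperties}, verifies the two constraints of (\ref{eqn:DelayOptimization}) to place the interior optimum (with $\mathbb{W}_0=-1/2$, $p=1/(2\lambda\overline C)$) inside the feasible set when $\lambda C>1$, and identifies the boundary solution $-\mathbb{W}_0(-\lambda\overline C/\sqrt{\rho^*})=\lambda\overline C$, i.e.\ $p^*=1$ and delay density $\lambda\exp\{\lambda C\}$, when $\lambda C\le 1$. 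Your added details --- the explicit monotonicity of $-\mathbb{W}_0(-\lambda\overline C/\sqrt\rho)$ pinning the feasible set as $\{\rho\ge\rho_0\}$, and the check that the $\mathbb{W}_{-1}$ branch gives the larger delay $e^2\lambda^2\overline C>2e\lambda^2\overline C$ --- only make precise what the paper leaves implicit (cf.\ Remark~\ref{rmk:BadEquilibrium}).
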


Comparing (\ref{eqn:DelayAtEquilibriumOptimalWithOptRho}) and (\ref{eqn:GlobalOptimalDensityOfDelay}) we conclude the following result:
\begin{proposition}
\label{prop:DelayGlobalAndEquilibriumOptimal}
The spatial delay density in the game problem at equilibrium equals the global optimal spatial delay density, i.e., $d_t(p_m)=d_t(p^*,p^*)$,  if the price factor is chosen as in (\ref{eqn:DelayOptimalRhoValues}).
\end{proposition}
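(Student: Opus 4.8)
The plan is to establish the proposition by a direct term-by-term comparison of two piecewise formulas already in hand: the global optimal spatial delay density $d_t(p_m)$ in (\ref{eqn:GlobalOptimalDensityOfDelay}), obtained in Proposition \ref{prop:GlobalMaximaDensityofSuccess}, and the equilibrium spatial delay density $d_t(p^*,p^*)$ under the optimal price factor in (\ref{eqn:DelayAtEquilibriumOptimalWithOptRho}). Both are defined over the same partition of the parameter space, the regimes $\lambda C > 1$ and $\lambda C \leq 1$, so the task reduces to matching the corresponding branch in each regime.

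First I would make explicit that, by the relation $d_t(p) = \lambda t(p) = \lambda^2/d_s(p)$ recorded in Section \ref{sec:ModelAndSetup}, the delay density is a fixed monotone function of the underlying MAP. Hence it suffices to show that the equilibrium MAP $p^*$ induced by the optimal price factor $\rho^*$ of (\ref{eqn:DelayOptimalRhoValues}) coincides with the global optimizer $p_m = \min\{1,1/\lambda C\}$ of (\ref{eqn:OptimalMAPDensityofSucess}); equality of the delay densities then follows because both are the same function evaluated at the same point.

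Next I would check the two regimes. For $\lambda C > 1$, the optimal price factor $\rho^* = (2\lambda\overline C\sqrt e)^2 = 4e(\lambda\overline C)^2$ substituted into the fixed-point relation (\ref{eqn:DelayUtilityFixedPoint}) yields $p^* = 1/(2\lambda\overline C) = 1/(\lambda C)$, which is exactly $p_m$ in this regime; both delay densities then equal $\lambda^2 e C$. For $\lambda C \leq 1$, the second line of (\ref{eqn:DelayOptimalRhoValues}) forces $p^* = 1 = p_m$, and both densities equal $\lambda\exp\{\lambda C\}$. In each case the two expressions coincide, which is the claim.

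The comparison itself is essentially a lookup and presents no obstacle; the substantive work lies upstream, in the preceding theorem, where the Lambert-function optimization of $h(\rho)$ (Lemma \ref{lma:DelayObjectiveProperties}) and the feasibility check against the constraints of (\ref{eqn:DelayOptimization}) are what guarantee that $\rho^*$ drives the selfish equilibrium MAP onto $p_m$. I would therefore close by invoking that derivation: since optimal pricing aligns the equilibrium MAP with the team optimizer in both regimes, the equilibrium and global optimal spatial delay densities are identical.
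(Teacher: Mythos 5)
Your proposal is correct and takes essentially the same route as the paper, whose entire proof consists of directly comparing the piecewise expressions (\ref{eqn:DelayAtEquilibriumOptimalWithOptRho}) and (\ref{eqn:GlobalOptimalDensityOfDelay}); your reduction to matching MAPs ($p^*=1/(2\lambda\overline C)=1/(\lambda C)=p_m$ when $\lambda C>1$, and $p^*=1=p_m$ when $\lambda C\leq 1$) is exactly the computation the paper performs in the paragraph preceding the theorem that establishes (\ref{eqn:DelayOptimalRhoValues}). One cosmetic remark: the delay density is not \emph{monotone} in the MAP (it is U-shaped in $p$), but your argument never actually uses monotonicity, only that both sides are the same fixed function evaluated at the same point, so nothing is affected.
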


Again, by appropriately pricing the nodes, the selfish behavior can be used to attain the global optimal performance in the game problem. If the SNE is not unique at the optimal price factor $\rho^*$, then one needs to ensure that the nodes reach the equilibrium that is computed on the principal branch of the Lambert function to realize the global optimal performance at equilibrium. Indeed, if $\lambda C >1$  and $-\mathbb{W}_{-1}(-1/2\sqrt e) \leq \lambda \overline C$  or equivalently $\rho_{-1} \geq 4e (\lambda \overline C )^2$, at the optimal price $\rho^*$, an equilibrium point exists on the $\mathbb{W}_{-1}$ branch of the Lambert function. It would be interesting to learn about the way to reach a favorable equilibrium at the optimal price factor. However we do not pursue this question in this paper.
\begin{remark}
\label{rmk:BadEquilibrium}
If $\lambda \overline C >1$ and optimization is restricted to the $\mathbb{W}_{-1}$ branch in (\ref{eqn:DelayOptimization}), then by Lemma \ref{lma:DelayObjectiveProperties}, the objective function is minimized by choosing a $\rho$ satisfying $\sqrt \rho = e\lambda \overline C$, which results in the equilibrium probability $1/(\lambda \overline C)$. The spatial density of delay at this equilibrium point is given by
\[-\frac{ \rho \mathbb{W}\left(-\lambda\overline C / \sqrt \rho \right)}{\lambda\overline C}=\frac{(e\lambda\overline C)^2}{\lambda \overline C}=(e/2)e\lambda C.\]
Comparing this value with (\ref{eqn:GlobalOptimalDensityOfDelay}), we see that spatial density of delay increased by a factor of $e/2$ by the selfish behavior of the nodes.
\end{remark}
\vspace*{-.1in}
\section{PRICE of ANARCHY}
\label{sec:PriceofAnarchy}
\vspace*{-.1in}
In this section we study the degradation in the network performance due to a selfish behavior of the nodes. The Price of Anarchy (PoA) compares the social utility at the worst equilibrium with the optimal social utility \cite{PriceOfAnarchy}. For our Poisson bipolar MANET with infinitely many players, we define the PoA as the ratio of the optimal spatial average performance that can  be achieved, to the spatial average performance at the worst SNE. Recall that we denoted the system utility by $U(p)$ when we considered the team problem, with each node using the same MAP $p$. For the game problem we denoted the utility of a tagged node by $U(p^\prime,p)$. In the game problem the spatial average performance at equilibrium  is evaluated by multiplying the utility of the tagged node and the intensity of the P.p.p . Then the PoA is given by

\begin{equation}
\label{eqn:POADefinition}
PoA=\frac{\max_{p \in [0\; 1]}U(p)}{\min_{p^* \in S }\lambda U(p^*,p^*)},
\end{equation}
where $S \subset [0\;1]$ denotes the set of symmetric Nash equilibria.

We study the PoA as a function of $\rho$ for a given value of $\lambda$ and $C$. The utilities studied in Section \ref{sec:UtilityWithGoodput} and \ref{sec:UtilityWithDelay} are considered below.

\subsection{Goodput}
In this subsection we consider the utility defined in Section \ref{sec:UtilityWithGoodput}. Let us begin by considering the team utility.  When all the nodes use the MAP $p$, then from Equation (\ref{eqn:UtilitySuccessRate}), the team utility is given by
\begin{equation}
\label{eqn:PoAGlobalGoodputUtility}
U(p)=\lambda p \exp\{-p\lambda C\}-\lambda p \rho.
\end{equation}
If $\rho >1$, the maximum value of the utility is zero and the maximum is attained at $p_m=0$. Let $p_m:=p_m(\rho)$ denote the MAP that maximizes the  team utility in (\ref{eqn:PoAGlobalGoodputUtility}). The following lemma gives its value.
\begin{lemma}
\label{lma:PoATeamGoodputOptimizer}
The MAP value that maximizes the team utility (\ref{eqn:PoAGlobalGoodputUtility}) is given by \[p_m=\frac{1-\mathbb{W}(\rho e)}{\lambda C}\]
for all $\rho \geq 0$ if $\lambda C \geq 1$, and if $\lambda C < 1$ it is the maximizer for $\rho$ such that $\mathbb{W}(\rho e)\geq 1-\lambda C.$
Further the maximum team utility is given by
$U(p_m)=\frac{\rho (1-\mathbb{W}(\rho e))^2}{ C \mathbb{W}(\rho e)}.$
\end{lemma}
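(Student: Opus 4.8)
The plan is to treat this as an unconstrained first-order optimization and then carefully re-introduce the box constraint $p\in[0,1]$. Writing the team utility as $U(p)=\lambda p\left(e^{-\lambda C p}-\rho\right)$, I would first differentiate and set the derivative to zero. A short computation gives
\[
U'(p)=\lambda e^{-\lambda C p}\left(1-\lambda C p\right)-\lambda\rho,
\]
so the interior stationarity condition reads $e^{-\lambda C p}(1-\lambda C p)=\rho$.

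The key step is the Lambert substitution. Setting $x=1-\lambda C p$ so that $e^{-\lambda C p}=e^{x-1}$, the stationarity equation becomes $x e^{x}=\rho e$. By the definition of the Lambert function this is solved by $x=\mathbb{W}(\rho e)$; since the argument $\rho e\ge 0>-1/e$ we are on the principal branch $\mathbb{W}_0$, which is the relevant one because we seek $x\ge 0$. Undoing the substitution yields exactly $p_m=\frac{1-\mathbb{W}(\rho e)}{\lambda C}$.

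Next I would verify this stationary point is indeed a maximizer and locate it relative to the endpoints. Differentiating once more gives $U''(p)=-\lambda^2 C e^{-\lambda C p}(2-\lambda C p)$; since $\mathbb{W}(\rho e)\ge 0$ we have $\lambda C p_m=1-\mathbb{W}(\rho e)\le 1<2$, so $U''(p_m)<0$. Because $g(p):=e^{-\lambda C p}(1-\lambda C p)$ is strictly decreasing on $[0,1/(\lambda C))$, the interior root is unique, and together with $U(0)=0$, $U'(0)=\lambda(1-\rho)>0$ this makes $p_m$ the global maximizer of the unconstrained problem. The feasibility bookkeeping is where the case split enters: $p_m\ge 0$ forces $\mathbb{W}(\rho e)\le 1$, i.e. $\rho\le 1$ (consistent with the earlier observation that the maximum is the boundary value $p_m=0$ once $\rho>1$), while $p_m\le 1$ is equivalent to $\mathbb{W}(\rho e)\ge 1-\lambda C$. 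When $\lambda C\ge 1$ the right-hand side is nonpositive and the bound holds automatically for every $\rho\ge 0$; when $\lambda C<1$ it is precisely the stated condition, and when it fails the interior root exceeds $1$ so that $U$ is still increasing at $p=1$ and the constrained maximizer is the endpoint. This boundary analysis is the main obstacle, since it is the only part that is not a mechanical manipulation.

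Finally I would evaluate the utility at $p_m$. Rather than substituting the Lambert expression directly into the exponential, I would reuse the stationarity relation in the convenient form $e^{-\lambda C p_m}=\rho/\mathbb{W}(\rho e)$, obtained from $e^{-\lambda C p_m}(1-\lambda C p_m)=\rho$ together with $1-\lambda C p_m=\mathbb{W}(\rho e)$. Then
\[
U(p_m)=\lambda p_m\left(e^{-\lambda C p_m}-\rho\right)=\frac{1-\mathbb{W}(\rho e)}{C}\cdot\rho\cdot\frac{1-\mathbb{W}(\rho e)}{\mathbb{W}(\rho e)}=\frac{\rho\left(1-\mathbb{W}(\rho e)\right)^2}{C\,\mathbb{W}(\rho e)},
\]
using $\lambda p_m=(1-\mathbb{W}(\rho e))/C$. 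This matches the claimed value and completes the argument.
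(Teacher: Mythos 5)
Your proposal is correct and takes essentially the same route as the paper's proof: both derive the first-order condition $e^{-\lambda C p}(1-\lambda C p)=\rho$, invert it via the Lambert function on the principal branch to get $p_m=(1-\mathbb{W}(\rho e))/(\lambda C)$, split on $\lambda C \geq 1$ versus $\lambda C<1$ for feasibility of $p_m\in[0,1]$, and evaluate $U(p_m)$ using the identity $e^{-\lambda C p_m}=\rho/\mathbb{W}(\rho e)$ obtained from the stationarity relation. Your added second-order verification, the uniqueness argument for the interior root, and the endpoint analysis when $\mathbb{W}(\rho e)<1-\lambda C$ merely make explicit what the paper leaves implicit, so there is no gap.
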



By using the definition of the Lambert function, one can verify that $U(p_m)$ is a decreasing function in $\rho$. Indeed, differentiating $U(p_m)$ with respect to $\rho$ we have

\begin{eqnarray}
\label{eqn:PoAGoodPutDeriGlobalUtility}
\frac{\partial U(p_m)}{\partial \rho}&=&\frac{(1-\mathbb{W}(\rho e))^2}{C \mathbb{W}(\rho e)}
-\rho e \frac{(1-\mathbb{W}(\rho e))}{C \mathbb{W}(\rho e)}\frac{(1+\mathbb{W}(\rho e))\mathbb{W}^\prime(\rho e)}{C \mathbb{W}(\rho e)}\\
\label{eqn:PoAGoodPutSubsLambertDerv}
&=&\frac{(1-\mathbb{W}(\rho e))^2}{C \mathbb{W}(\rho e)} -\frac{(1-\mathbb{W}(\rho e))}{C \mathbb{W}(\rho e)}.
\end{eqnarray}
In (\ref{eqn:PoAGoodPutDeriGlobalUtility}) $\mathbb{W}^\prime$ denotes the derivative of the Lambert function. Equation (\ref{eqn:PoAGoodPutSubsLambertDerv}) follows by applying formula for the derivative of Lambert function. The last equation is negative valued for all $\rho \in [0,\;1]$. Thus the optimal utility is a decreasing function in $\rho$.

Let us look at the utility of the tagged node at equilibrium. From Proposition \ref{prop:GoodputequilibriumMAP}
we have
\begin{equation}
\label{eqn:PoAGoodPutEquilibriumutility}
U(p^*,p^*)=\left\{
  \begin{array}{ll}
    \exp\{-\lambda C\}-\rho, & \hbox{if \;} \rho \leq \exp\{-\lambda C\}\\
    0, & \hbox{if \;} \rho \geq \exp\{-\lambda C\}
  \end{array}
\right.
\end{equation}
The utility at equilibrium is also a decreasing function in $\rho$ for all $\rho\leq \exp\{-\lambda C\}$. With the expression for utility at equilibrium and global optimum we have the following result for PoA:

\begin{thm}
The value of the PoA is
\begin{equation}
PoA(\rho)=\left\{
            \begin{array}{ll}
            \frac{\rho (1-\mathbb{W}(\rho e))^2}{ \lambda C \mathbb{W}(\rho e)\{\exp\{-\lambda C\}-\rho\}}, & \hbox{if \;}\rho < \exp\{-\lambda C\}\\
         \infty, & \hbox{if\;} \rho \geq \exp\{-\lambda C\}
            \end{array}
          \right.
\end{equation}
\end{thm}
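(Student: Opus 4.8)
The plan is to evaluate the numerator and the denominator of the $PoA$ in (\ref{eqn:POADefinition}) separately and then form their ratio, splitting into the two price regimes that appear in the statement. Everything needed is already available in the excerpt, so the argument is essentially one of assembling the right pieces and doing elementary cancellation.

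First I would treat the numerator $\max_{p\in[0,1]} U(p)$. This is exactly the maximal team utility, which Lemma \ref{lma:PoATeamGoodputOptimizer} supplies directly: the maximizer is $p_m = (1-\mathbb{W}(\rho e))/(\lambda C)$ and the optimal value is $U(p_m) = \rho(1-\mathbb{W}(\rho e))^2/(C\,\mathbb{W}(\rho e))$. No new computation is required; I would simply quote the lemma, observing that the intensity factor $\lambda$ present in (\ref{eqn:PoAGlobalGoodputUtility}) has already cancelled in the course of deriving $U(p_m)$.

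Next I would compute the denominator $\min_{p^* \in S} \lambda U(p^*,p^*)$. The decisive observation is that Proposition \ref{prop:GoodputequilibriumMAP} yields a \emph{unique} SNE for every value of $\rho$, so the set $S$ is a singleton and the minimum collapses to a single evaluation (this is the structural contrast with the delay game, where $S$ may contain two points). The equilibrium utility itself is recorded in (\ref{eqn:PoAGoodPutEquilibriumutility}): it equals $\exp\{-\lambda C\}-\rho$ for $\rho \le \exp\{-\lambda C\}$ and vanishes for $\rho \ge \exp\{-\lambda C\}$. Multiplying by $\lambda$ gives the denominator.

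Finally I would form the quotient. When $\rho < \exp\{-\lambda C\}$ the denominator $\lambda(\exp\{-\lambda C\}-\rho)$ is strictly positive, and substituting the two expressions yields the stated finite closed form after routine simplification. When $\rho \ge \exp\{-\lambda C\}$ the equilibrium utility, and hence the denominator, vanishes, which is precisely the tragedy-of-the-commons effect noted in the introduction. I expect the only delicate point to be confirming that the numerator $U(p_m)$ remains strictly positive on this regime, so that the ratio is genuinely $+\infty$ and not indeterminate: since $\rho<1<e$ and $\mathbb{W}$ is increasing with $\mathbb{W}(e)=1$, we have $\mathbb{W}(\rho e)<1$, whence $(1-\mathbb{W}(\rho e))^2>0$ and $U(p_m)>0$. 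This secures the claim on $\exp\{-\lambda C\}\le \rho<1$. At the degenerate corner $\rho\ge 1$ both the team and the equilibrium utilities vanish, but this lies outside the meaningful pricing range $\rho<1$ and need not be treated.
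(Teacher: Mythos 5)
Your proposal is correct and follows essentially the same route as the paper, which likewise obtains the theorem by quoting Lemma \ref{lma:PoATeamGoodputOptimizer} for the numerator $U(p_m)=\rho(1-\mathbb{W}(\rho e))^2/(C\,\mathbb{W}(\rho e))$, taking the equilibrium utility from (\ref{eqn:PoAGoodPutEquilibriumutility}) (with the singleton $S$ implicit in Proposition \ref{prop:GoodputequilibriumMAP}), and forming the ratio in (\ref{eqn:POADefinition}). Your explicit check that $\mathbb{W}(\rho e)<1$ keeps the numerator strictly positive on $\exp\{-\lambda C\}\le\rho<1$, so the ratio is genuinely $+\infty$, is a small point the paper leaves unstated.
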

The PoA is shown as a function of $\rho$ in Figure \ref{fig:PoAGoodput}. From this figure we see that as $\rho$ increases, the PoA grows unboundedly. Thus the PoA is optimal when the pricing factor is set to zero. If $\lambda C \geq 1$ then the PoA is infinite by definition at $\rho=1/e$. However we noted in Section \ref{sec:UtilityWithGoodput} that the optimal performance of the spatial density of success is achieved at equilibrium with the same price factor. If $\lambda C < 1$ then the PoA is infinite at $\rho=\exp\{-\lambda C\}$. But again we noted in Section \ref{sec:UtilityWithGoodput} that at this price factor the optimal performance of the spatial density of success is achieved at equilibrium.

In Figure \ref{fig:PoAGoodputEqOptMAP} the equilibrium MAP and global optimal MAP are shown. For all values of $\rho$, the equilibrium MAP is larger than the global optimal MAP. Hence the nodes transmit more aggressively at equilibrium. But we note from Figure \ref{fig:PoAGoodputEqOptMAP} that the gap between the global optimal MAP and the equilibrium MAP reduces with pricing.

\subsection{Delay}

Consider the utility function in Equation (\ref{eqn:DelayUtility}).
The team utility for this game, when each node transmits with MAP $p$ is
\begin{equation}
\label{eqn:DelaySocialUtility}
U(p)=\frac{-\lambda }{p\exp\{-p\lambda C\}} - \lambda \rho p.
\end{equation}
It is easy to verify that the above utility function is concave in $p$. Assume that $\lambda C >1$. Then the unique MAP, denoted as $p_m:=p_m(\rho)$ that maximizes the social utility  satisfies
\begin{equation}
\label{eqn:DelaySocialOptimalSoln}
\exp\{p_m\lambda C \}(1-p_m\lambda C)=\rho p^2.
\end{equation}
We obtain this by differentiating Equation (\ref{eqn:DelaySocialUtility}) and setting to zero. Note that any $p_m$ that satisfies Equation (\ref{eqn:DelaySocialOptimalSoln}) also satisfies $p_m\lambda C \leq 1$, hence $p_m \in [0\;1]$. Also, it can be easily verified that $p_m$ is decreasing in $\rho$.

The utility at equilibrium can be obtained by using the equilibrium MAP in (\ref{eqn:LambertSolution}) and (\ref{eqn:DealyAtEquilibrium}), as
\begin{equation}
\label{eqn:DelayUtilityAtEquilibrium}
U(p^*,p^*)=(-2\lambda \rho/\lambda \overline{C})\mathbb{W}\left(\frac{-\lambda \overline{C}}{\sqrt \rho}\right).
\end{equation}
When $\lambda C >1$, from Lemma \ref{lma:DealyTwoEquilibriumCondn}, two symmetric Nash equilibria are possible for the price factor $\rho \leq \rho_{-1}$. Hence the above utility function can take two values, one corresponding to each equilibrium. Recall that we denoted by $p^*_0$ the SNE computed on the principal branch, and by $p^*_{-1}$ that computed on the other branch of the Lambert function. Recall that
$ p^*_0 \leq p^*_{-1}.$ The following proposition gives a bound for the PoA

\begin{thm}
\label{prop:PoABounds}
For Poisson bipolar MANETs with utility as in(\ref{eqn:DelayUtility}),
\begin{eqnarray}
\label{eqn:DelayPoABounds}
\frac{p_m(\rho_{-1})\lambda (2-p_m(\rho_{-1})\lambda C)}{2(1-p_m(\rho_{-1})\lambda C)}\leq PoA(\rho)
\leq \frac{p_m(\rho_t)\lambda \overline{C}(2-p_m(\rho_t)\lambda C)}{2(1-p_m(\rho_t)\lambda C)}
\end{eqnarray}
for $\rho \in [\rho_t \;\rho_{-1}]$, where $\sqrt {\rho_0}=e\lambda \overline C$. In addition
\begin{equation}
\label{eqn:DelayPoAUpperWBranch}
\frac{p_m}{p^*_0}\leq PoA(\rho) \leq 1 \text{\;for \;} \rho \geq \rho_{-1}.
\end{equation}
\end{thm}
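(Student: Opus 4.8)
The plan is to collapse the price-of-anarchy ratio (\ref{eqn:POADefinition}) into a purely algebraic function of the team optimiser $p_m$ and the worst equilibrium $p^*$, by using the two first-order characterisations to eliminate the exponentials, and then to control the resulting factors at the endpoints $\rho_t$ and $\rho_{-1}$.

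First I would put both utilities in closed form. Substituting the stationarity condition (\ref{eqn:DelaySocialOptimalSoln}), rewritten as $\exp\{p_m\lambda C\}=\rho p_m^2/(1-p_m\lambda C)$, into the team delay $t(p_m)=\exp\{p_m\lambda C\}/p_m$ and then into (\ref{eqn:DelaySocialUtility}) collapses the optimal per-node utility to $U(p_m,p_m)=-\rho p_m(2-p_m\lambda C)/(1-p_m\lambda C)$, where $p_m\lambda C\le 1$ keeps the denominator positive. Likewise the equilibrium fixed point (\ref{eqn:DelayUtilityFixedPoint}) gives $\rho (p^*)^2=\exp\{p^*\lambda C\}$, so that $t(p^*,p^*)=\rho p^*$ and $U(p^*,p^*)=-2\rho p^*$, in agreement with (\ref{eqn:DealyAtEquilibrium}). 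Since the team utility (\ref{eqn:DelaySocialUtility}) is exactly $\lambda$ times the per-node utility, the factor $\lambda$ cancels between numerator and denominator of (\ref{eqn:POADefinition}), leaving $PoA(\rho)=U(p_m,p_m)/\min_{p^*}U(p^*,p^*)$.

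Next I would identify the worst SNE. As $U(p^*,p^*)=-2\rho p^*$ is strictly decreasing in $p^*$, the minimiser is the equilibrium with the largest MAP; on $[\rho_t,\rho_{-1}]$ two equilibria coexist (Lemma \ref{lma:DealyTwoEquilibriumCondn}) with $p^*_0\le p^*_{-1}$, so the worst one is $p^*_{-1}$ (consistent with Remark \ref{rmk:BadEquilibrium}) and the ratio becomes
\[PoA(\rho)=\frac{G(\rho)}{p^*_{-1}(\rho)},\qquad G(\rho):=\frac{p_m(\rho)\,(2-p_m(\rho)\lambda C)}{2\,(1-p_m(\rho)\lambda C)}.\]
Writing $x:=p_m\lambda C\in[0,1)$ one has $G=g(x)/(\lambda C)$ with $g(x)=x(2-x)/(2(1-x))$, and a short computation gives $g'(x)=((x-1)^2+1)/(2(1-x)^2)>0$; since $p_m(\rho)$ is decreasing in $\rho$ (noted after (\ref{eqn:DelaySocialOptimalSoln})), $G$ is decreasing in $\rho$. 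The fixed-point geometry pins the equilibrium factor: $p^*_{-1}(\rho)$ rises from the coalescence value $p^*_{-1}(\rho_t)=1/(\lambda\overline C)$ to the boundary value $p^*_{-1}(\rho_{-1})=1$, so $1/p^*_{-1}(\rho)\in[1,\lambda\overline C]$. Bounding $G$ below by $G(\rho_{-1})$ and above by $G(\rho_t)$, and $1/p^*_{-1}$ below by $1$ and above by $\lambda\overline C$, then produces exactly the two-sided bound (\ref{eqn:DelayPoABounds}).

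For $\rho\ge\rho_{-1}$ the upper root leaves $[0,1]$ and only $p^*_0$ remains, so $PoA(\rho)=(p_m/p^*_0)\cdot(2-p_m\lambda C)/(2(1-p_m\lambda C))$. The trailing factor is $\ge 1$ because $p_m\lambda C\ge 0$, which gives the lower bound $p_m/p^*_0$. For the upper bound I would invoke optimality rather than compute: $p_m$ maximises the symmetric utility $U(p,p)$ over $[0,1]$, hence $U(p_m,p_m)\ge U(p^*_0,p^*_0)$, and since both are negative the ratio of magnitudes is at most $1$, i.e.\ $PoA\le 1$. I expect the only genuinely delicate part to be the equilibrium bookkeeping---confirming that $p^*_{-1}$ is the worst SNE and locating its endpoint values $1/(\lambda\overline C)$ and $1$; the monotonicity of $G$ and the algebraic substitutions are routine.
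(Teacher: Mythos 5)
Your proof is correct and takes essentially the same route as the paper's: substituting the first-order condition (\ref{eqn:DelaySocialOptimalSoln}) and the equilibrium fixed point (\ref{eqn:DelayUtilityFixedPoint}) to collapse the PoA to $p_m(2-p_m\lambda C)/\bigl(2p^*(1-p_m\lambda C)\bigr)$, identifying the worst SNE on the $\mathbb{W}_{-1}$ branch, and exploiting the monotonicity of the two factors to evaluate at the endpoints $\rho_t$ (where $-\mathbb{W}=1$, i.e.\ $p^*_{-1}=1/(\lambda\overline{C})$) and $\rho_{-1}$ (where $-\mathbb{W}_{-1}=\lambda\overline{C}$, i.e.\ $p^*_{-1}=1$), with your explicit computation $g'(x)>0$ and your observation that $U(p^*,p^*)=-2\rho p^*$ makes the largest equilibrium the worst merely making explicit what the paper asserts. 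One remark: your lower bound $G(\rho_{-1})$ carries no factor of $\lambda$, and this matches what the paper's own proof actually produces at $\rho=\rho_{-1}$, so the stray ``$\lambda$'' in the stated inequality (\ref{eqn:DelayPoABounds}) appears to be a typo in the theorem rather than a gap in your argument.
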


The PoA as a function of $\rho$ and the bounds obtained in Proposition \ref{prop:PoABounds} are shown in Figure \ref{fig:PoADelay}. The jump in the figure at $\rho=\rho_{-1}$ is due to two possible SNE for $\rho\leq \rho_{-1}$ and a unique SNE for $\rho >\rho_{-1}$. In the interval $[\rho_t, \;\;\rho_{-1}]$ the PoA is decreasing in $\rho$. This results from the bad Nash equilibrium that occurs on the $\mathbb{W}_{-1}$ branch of the Lambert function which increases in $\rho$. If the central agent can't set a price factor higher than $\rho_{-1}$ then, from the PoA point of view, it is desirable to set the lowest possible price factor, i.e., $\rho=\rho_t$. For a price factor larger than $\rho_{-1}$, there is unique SNE, which is smaller than the equilibrium that occurs on the $\mathbb{W}_{-1}$ branch and  decreases\footnote{Principle branch of Lambert function is decreasing function of $\rho$} with $\rho.$ 
Thus setting high a price factor leads to improved PoA.

If $\rho_{-1}< 4 e (\lambda \overline C)^2 $, from Proposition \ref{prop:DelayGlobalAndEquilibriumOptimal}, by setting $\rho=4 e (\lambda \overline C)^2$, we can obtain better performance at equilibrium and also good PoA. If $\rho_{-1}> 4 e (\lambda \overline C)^2 $ then by setting $\rho=4 e (\lambda \overline C)^2$ one obtains global optimal performance at equilibrium provided the nodes settle at an equilibrium that lies on the principal branch of the Lambert function, otherwise this price factor leads to a poor PoA.

\section{Conclusions}
\label{sec:Conclusion}
Geometric considerations play a very central role in wireless
communications, since the attenuation of wireless channels
strongly depend on the distance between transmitter and
receiver. Models that take into account the exact location
of mobiles are often too complex to analyze or to optimize.
Our objective in this paper is to model competition between
mobiles as a game in which the locations of players is given
by a Poisson point process.

More structured point processes can also be contemplated, for instance
exhibiting attraction (hot spots) or repulsion (more elaborate medium
access control than Aloha like e.g. CSMA). We leave the analysis of medium access
games under such point processes for future research.

The competition we considered in the paper was between individual mobiles
each taking its own selfish decisions. We saw that the equilibrium of the game
results in a more aggressive access (larger access probabilistically). We
studied further pricing, and identified pricing parameters that induce
an equilibrium achieving the social  optimal performance.
On the other hand we showed that the utility at equilibrium can be zero).

We plan in the future to study other games within this framework: for instance
games with finitely many operators, each taking decisions for all its subscribers.
In addition we shall study jamming games.

\bibliographystyle{IEEEtran}
\bibliography{JSAC}

\vspace{-0.1in}

\section*{appendices}
\vspace{-0.1in}
\subsection*{Proof of Lemma \ref{lma:DealyTwoEquilibriumCondn}}
\label{app:ProofDealyTwoEquilibriumCondn}
First consider the $\mathbb{W}_0$ branch of the Lambert function. As $\rho$ takes value in the interval $[(\lambda \overline{C}e)^2 \; \infty]$, $\mathbb{W}_0\left (\frac{-\lambda \overline{C}}{\sqrt \rho}\right)$ increases continuously from $-1$ to $0$. Thus (\ref{eqn:DelayUtilityFixedPoint}) has a solution in the interval $[0 \; 1]$ if $\lambda \overline{C} \geq 1$. This implies that equilibrium point exists on the $\mathbb{W}_0$ branch for all $\rho$, satisfying $\sqrt \rho \geq \lambda \overline{C}e$.

\noindent
The  $\mathbb{W}_{-1}\left (\frac{-\lambda \overline{C}}{\sqrt \rho}\right)$ branch decreases continuously from $-1$ to $-\infty$ as $\rho$ takes value in the interval $[(\lambda\overline{C}e)^2 \; \infty]$. This implies that there exists a $\sqrt \rho_{-1} \geq \lambda \overline{C}e$ such that $\mathbb{W}_{-1}\left(\frac{-\lambda \overline{C}}{\sqrt \rho_{-1}}\right))=-\lambda \overline{C}$, and for all $\rho$ such that $\sqrt \rho_{-1}\geq \sqrt \rho \geq \overline{C}e$ satisfies $-\mathbb{W}_{-1}\left (\frac{-\lambda \overline{C}}{\sqrt \rho}\right)\leq \lambda \overline{C}$, resulting in a $p \in [0\; 1]$ that is a solution of (\ref{eqn:DelayUtilityFixedPoint}). Hence there exists an equilibrium point on the $\mathbb{W}_{-1}$ branch for all $\rho$ satisfying $\sqrt\rho_{-1}\geq \sqrt\rho$. This concludes the proof.
\vspace{-0.1in}
\subsection*{Proof of Lemma \ref{lma:DelayUniqueNE}}
\label{app:ProofDelayUniqueNE}
As in Lemma \ref{lma:DealyTwoEquilibriumCondn} we can argue that on the $\mathbb{W}_0$ branch, there exists $\rho_0 \geq \lambda\overline C e$ such that $\mathbb{W}_0 \left (-\frac{\lambda \overline{C}}{\sqrt \rho_0}\right )=-\lambda \overline{C}$ and for all $\rho$ such that $\rho\geq \rho_{-1}\geq \lambda\overline{C}e$ satisfies $-\mathbb{W}_0\left (\frac{-\lambda \overline{C}}{\sqrt \rho}\right)\leq \lambda \overline{C}$ as $-\mathbb{W}_0$ is decreasing in $\rho$.
\vspace{-0.1in}
\subsection*{Proof of Lemma \ref{lma:DelayObjectiveProperties}}
\label{app:ProofDelayObjectiveProperties}
Differentiating $h(\rho)$ with respect to $\rho$
\begin{eqnarray}
\label{eqn:DelayEquilibriumDerivative1}
\lefteqn{\frac{\rm d}{\rm d \rho}h(\rho)
=-\mathbb{W}\left(-\lambda\overline C / \sqrt \rho \right)-  \mathbb{W}^\prime \left(-\lambda\overline C / \sqrt \rho \right)(\lambda\overline C /2  \sqrt \rho)}\\
\label{eqn:DelayEquilibriumDerivative2}
&=&-\mathbb{W}\left(-\lambda\overline C / \sqrt \rho \right)\left (1-\frac{1}{2(1+\mathbb{W}\left(-\lambda\overline C / \sqrt \rho \right))}\right).
\end{eqnarray}
In Equation (\ref{eqn:DelayEquilibriumDerivative1}) $\mathbb{W}^\prime(\cdot)$ denotes the derivative of the Lambert function which is given as \cite{LambertFunction}[eqn. 3.2]
\begin{equation}
\label{eqn:LambertDerivative}
\mathbb{W}^\prime(x)=\frac{\mathbb{W}(x)}{x(1+\mathbb{W}(x))} \;\;\text{for} \;\; x\neq 0,x\neq -1/e.
\end{equation}
Equation (\ref{eqn:DelayEquilibriumDerivative2}) is obtained by substituting the derivative in (\ref{eqn:LambertDerivative}), evaluated at $x=-\lambda\overline C /\sqrt \rho$, in Equation (\ref{eqn:DelayEquilibriumDerivative1}).
Recall that on the principal branch of the Lambert function $\mathbb{W}(-\lambda \overline C /\sqrt \rho)$ is a negative valued increasing function in $\rho$. Then the term within parenthesis in (\ref{eqn:DelayEquilibriumDerivative2}) is a increasing function of $\rho$ passing through the origin at $\rho^*$ that satisfies $\mathbb{W}(-\lambda \overline C /\sqrt {\rho^*})=-1/2$.  Thus $h(\rho)$ is decreasing for $\rho \leq \rho ^*$ and increasing for $\rho \geq \rho ^*$. From \cite{ConvexOptimizationBook}[sec. 3.4.2] we conclude that $h(\rho)$ is a quasi convex function in $\rho$.

\noindent
Further by the definition of the Lambert function
\begin{eqnarray}
\label{eqn:OptimalRhoCondition}
-\lambda \overline C/\sqrt {\rho^*} &=&\mathbb{W}(-\lambda\overline C/{\sqrt {\rho^*}})\exp\{\mathbb{W}(-\lambda\overline C/\sqrt {\rho^*})\}=-\frac{1}{2}\exp\{-1/2\}.
\end{eqnarray}
Rearranging Equation (\ref{eqn:OptimalRhoCondition}), we get
$\rho^*=4 e (\lambda\overline C)^2$.
The other part of the Lemma follows by noting that $-\mathbb{W}(-\lambda \overline C /\sqrt \rho)$ is an increasing function in $\rho$ on the $\mathbb{W}_{-1}$ branch.
\vspace{-0.1in}
\subsection*{Proof of Proposition \ref{prop:PoABounds}}
\label{app:ProofPoABounds}
From Equation (\ref{eqn:DelayUtilityAtEquilibrium}) and (\ref{eqn:DelaySocialUtility}) we have
\begin{eqnarray}
\label{eqn:DelayPoARatio}
PoA(\rho) &=&\frac{\exp\{p_m\lambda C\}/p_m + \rho p_m}{(-2\rho/\lambda \overline{C})\mathbb{W}\left(-\frac{\lambda \overline{C}}{\sqrt \rho}\right)}\\
\label{eqn:DelayPoASubsSocialOptimalCondn}
&=&\frac{p_m/(1-p_m\lambda C) +  p_m}{(-2/\lambda \overline{C})\mathbb{W}\left(-\frac{\lambda \overline{C}}{\sqrt \rho}\right)}\\
\label{eqn:DeayPoARearrange}
&=&\frac{p_m(2-p_m\lambda C)}{\left(-2/\lambda \overline{C}\right)\mathbb{W}\left(-\frac{\lambda \overline{C}}{\sqrt \rho}\right)(1-p_m\lambda C)}\\
\label{eqn:DelayPoALowerBound}
&\geq&\frac{p_m}{\left(-1/\lambda \overline{C}\right)\mathbb{W}\left(-\frac{\lambda \overline{C}}{\sqrt \rho}\right)}.
\end{eqnarray}
We arrive at equality (\ref{eqn:DelayPoASubsSocialOptimalCondn}) by dividing both numerator and denominator in (\ref{eqn:DelayPoARatio}) by $\rho$, and applying the relation in (\ref{eqn:DelaySocialOptimalSoln}). Equality (\ref{eqn:DeayPoARearrange}) is obtained by simple rearrangement of terms in the previous step.

\noindent
To derive the bounds in (\ref{eqn:DelayPoABounds}), we consider the equilibrium computed on the $\mathbb{W}_{-1}$ branch of the Lambert function as it leads to the worst case equilibrium. This equilibrium is an increasing function in $\rho$ in the interval $ \rho_0\leq \rho\leq \rho_{-1}$ as discussed in the proof of Lemma \ref{lma:DealyTwoEquilibriumCondn}. Also, recall that the value of $p_m$ is decreasing in $\rho$. Thus the numerator in (\ref{eqn:DelayPoASubsSocialOptimalCondn}) is decreasing in $\rho$. Which implies that the ratio in (\ref{eqn:DeayPoARearrange}) is also decreasing in $\rho$. The upper bound in (\ref{eqn:DelayPoABounds}) now follows by noting that $-\mathbb{W}(\rho_0)=1$. To obtain the lower bound we use the relation $-\mathbb{W}(-\lambda \overline C/\sqrt {\rho_{-1}})=\lambda \overline C$ in (\ref{eqn:DelayPoALowerBound}).

For values of $\rho$ larger than $\rho_{-1}$ the SNE is unique, resulting from the principal branch of the Lambert function. The upper bound in (\ref{eqn:DelayPoAUpperWBranch}) follows directly by the definition of PoA, and the lower bound follows from the Inequality (\ref{eqn:DelayPoALowerBound}). Note that the lower bound is a function of $\rho.$


\section*{Proof of Lemma \ref{lma:NEExistance}}
\label{app:ProofNEExistance}
Consider a point to set map $\delta:[0,\;1]\rightarrow [0,\;1] $ defined by
\[\delta (p)=\left \{p^\prime | U(p^\prime,p)=\max_{q \in [0,\;1]}U(q,p)\right \}.\]
This defines the set of best responses of
the tagged node when all the other nodes use the MAP $p$. It follows from the continuity of $U(p^\prime,p)$ in $p$ and concavity in $p^\prime$ for a fixed $p$, that $\Gamma$ is an upper continuous mapping that maps each point of the set $[0,\;1]$ into a subset of $[0,\;1].$ By the Kakutani fixed point theorem, there exists a point $p^* \in [0,\;1]$ such that
\[U(p^*,p^*)=\max_{p^\prime \in [0,\;1]} U(p^\prime,p^*).\]
Then $p^*$ is the symmetric Nash equilibria by Definition \ref{defn:SymmetricNashEquilibrium}.
\vspace{-0.1in}
\subsection*{Proof of Lemma \ref{lma:PoATeamGoodputOptimizer}}
\label{app:ProofPoATeamGoodputOptimizer}
Assume $\rho\leq 1$, then $p_m >0 $ and satisfies the relation $\exp\{-p_m\lambda C\}(1-p_m\lambda C)= \rho.$
By rearranging, $p_m$ can be expressed $p_m=(1-\mathbb{W}(\rho e))(\lambda C).$
Recall that $\mathbb{W}(\cdot)$ is a monotonically increasing function taking values $\mathbb{W}(0)=0$ and $\mathbb{W}(e)=1$. If $\lambda C \geq 1$ the $p_m$ lies in the interval $[0 ,\;1]$ for all $\rho \in [0 ,\;1]$. If $\lambda C <1$, then $p_m$ lies in the interval $[0 ,\;1]$ for all $\rho$ such that $\mathbb{W}(\rho e) \geq 1-C \lambda$. Thus whenever $\lambda C <1$ we assume that $\rho$ satisfies $\mathbb{W}(\rho e) \geq 1-C \lambda$. The optimal value of the utility function is

\begin{eqnarray}
\label{eqn:PoAGoodPutSubstituteGlobalOpt}
p_m\exp\{p_m\lambda C\}&=&\frac{1-\mathbb{W}(\rho e)}{\lambda C}\exp\{-1+\mathbb{W}(\rho e)\} \\
\label{eqn:PoAGoodPutApplyLambertDefn1}
&=& \frac{1}{e \lambda C}\{\exp\{\mathbb{W}(\rho e)\} - \rho e\}\\
\label{eqn:PoAGoodPutApplyLambertDefn2}
&=&\frac{1}{e \lambda C}\left \{\frac{\rho e}{\mathbb{W}(\rho e)}- \rho e\right \}
=\frac{\rho}{ \lambda C}\left \{\frac{1}{\mathbb{W}(\rho e)} -1 \right \}
\end{eqnarray}
where (\ref{eqn:PoAGoodPutSubstituteGlobalOpt}) is obtained by substituting the value of MAP maximizes the team utility. Equation (\ref{eqn:PoAGoodPutApplyLambertDefn1}) and (\ref{eqn:PoAGoodPutApplyLambertDefn2}) follows by application of the definition of the Lambert function. The maximum utility for the team case can be now computed as a function of $\rho$

\begin{equation}
\label{eqn:PoAGoodputGlobalOptima}
U(p_m)=\frac{\rho (1-\mathbb{W}(\rho e))^2}{ C \mathbb{W}(\rho e)}.
\end{equation}
%
%
\begin{table}
\caption{Characterization of symmetric Nash equilibria}
\centering
\begin{tabular}{c c c}
   $\lambda \overline C$ & Price factor & SNE $(p^*)$\\
   \hline\hline
    & $\rho \leq \rho_0$ & $1$ \\
                 \raisebox{1.5ex}{$\lambda \overline C <1$}     & $\rho > \rho_0$ & $-\mathbb{W}_0(-\lambda \overline C / \sqrt \rho)/\lambda \overline C$ \\
   \hline
  & $\rho < (e\lambda \overline C)^2$ & $1$ \\
  \raisebox{1.2ex}{$\lambda \overline C  \geq 1$} &  $\rho_{-1} \geq \rho \geq  (e\lambda \overline C)^2$ &  $-\mathbb{W}_0(-\lambda \overline C / \sqrt \rho)/\lambda \overline C \text{\; or \;} -\mathbb{W}_{-1}(-\lambda \overline C / \sqrt \rho)/\lambda \overline C$ \\
              & $\rho \geq \rho_{-1}$  &  $-\mathbb{W}_0(-\lambda \overline C / \sqrt \rho)/\lambda \overline C$ \\
  \hline
\end{tabular}
\label{tab:DelaySNE}
\end{table}

\begin{figure}[!htb]
\centering
  \includegraphics[scale=.4]{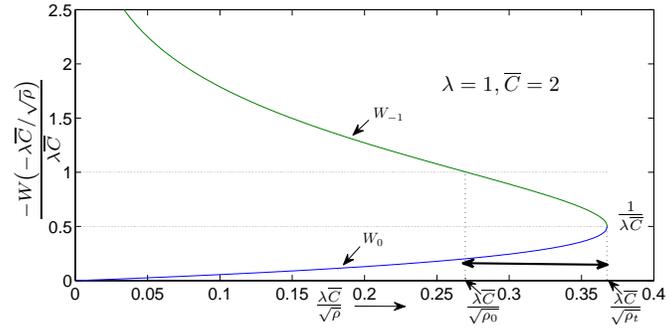}\\
  \caption{\small{SNE on both branches}}\label{fig:DelayLambert1}
\end{figure}

\begin{figure}[!htb]
\centering
  \includegraphics[scale=.4]{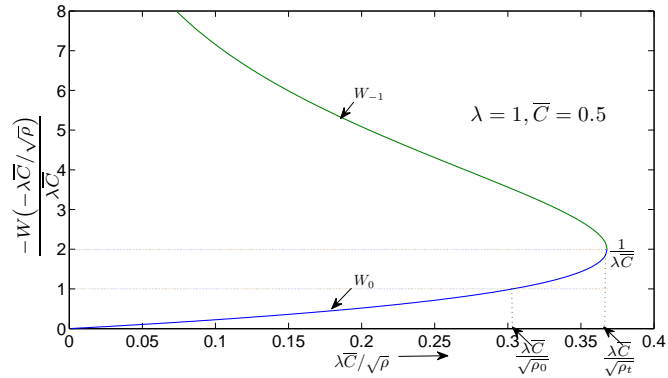}\\
  \caption{\small{Unique SNE }}\label{fig:DelayLambert2}
\end{figure}

\begin{figure}[!htb]
\centering
  \includegraphics[scale=.4]{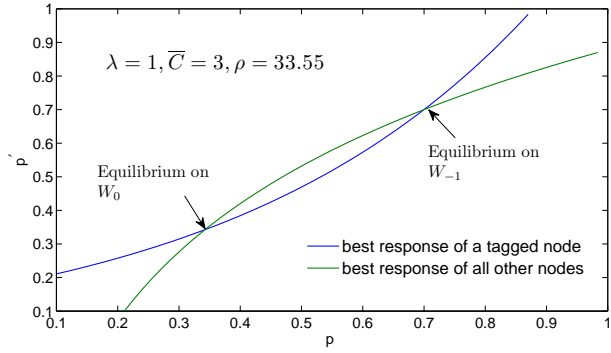}\\
  \caption{\small{Best Response}}\label{fig:Best Response}
\end{figure}


%
%

\begin{figure}[h]
\begin{centering}
  \includegraphics[scale=.4]{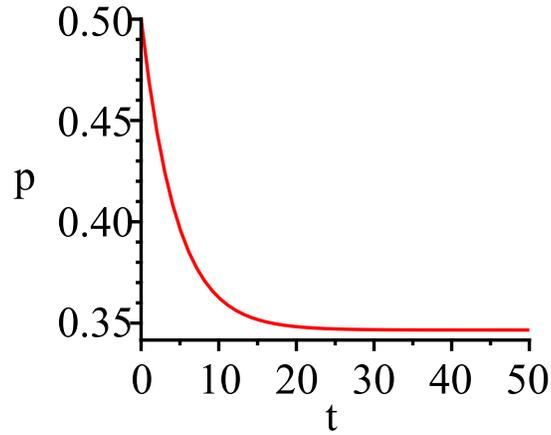}\\
  \caption{Convergence}\label{fig:Dynamics}
  \end{centering}
\end{figure}

\begin{figure*}[!htb]
 \begin{minipage}{0.4\linewidth}
    \centering
   \includegraphics[scale=.4]{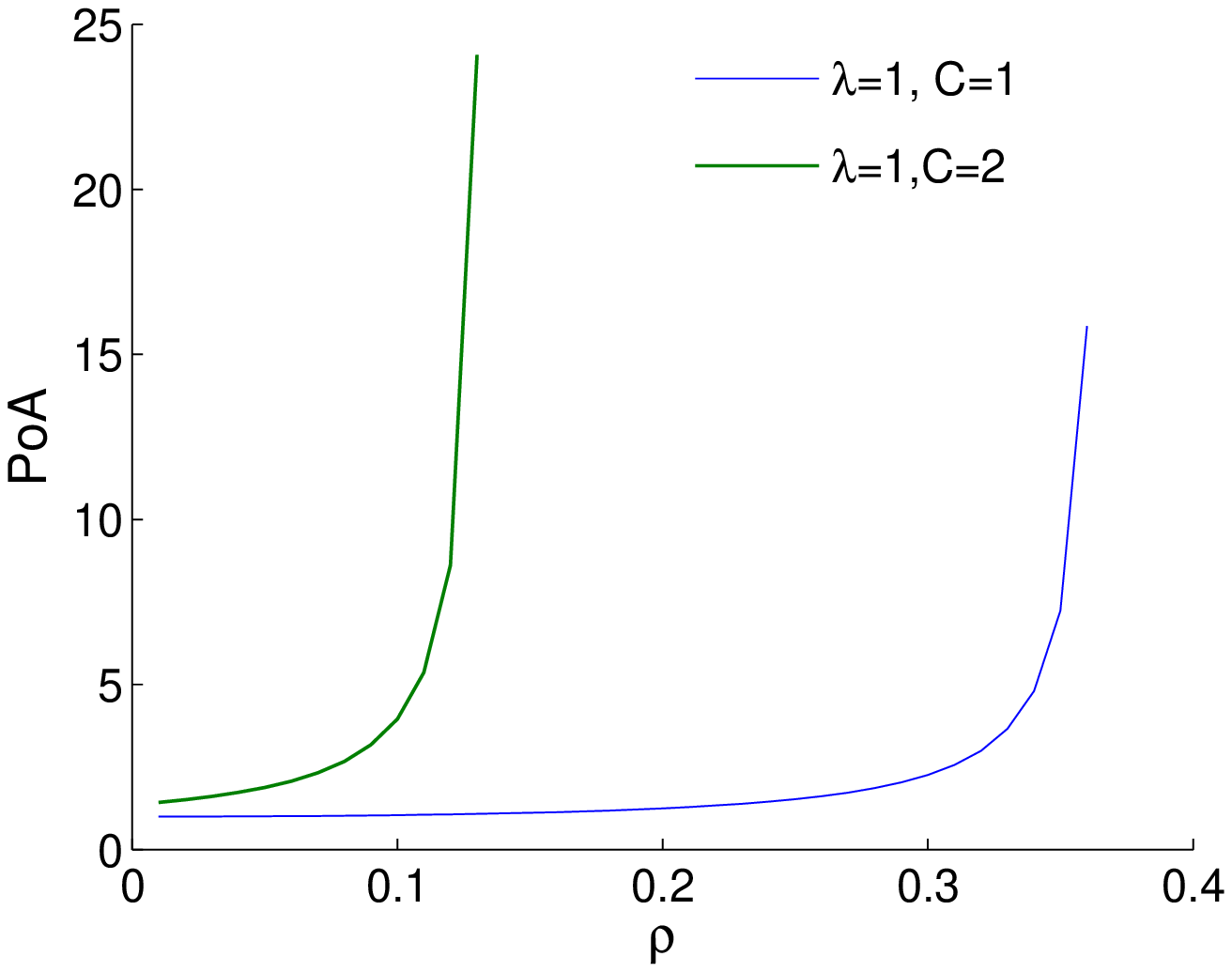}
   \caption{\small{PoA for Goodput}}
   \label{fig:PoAGoodput}
 \end{minipage}
 \hspace{0.5cm}
 \begin{minipage}{0.4\linewidth}
    \centering
   \includegraphics[scale=.4]{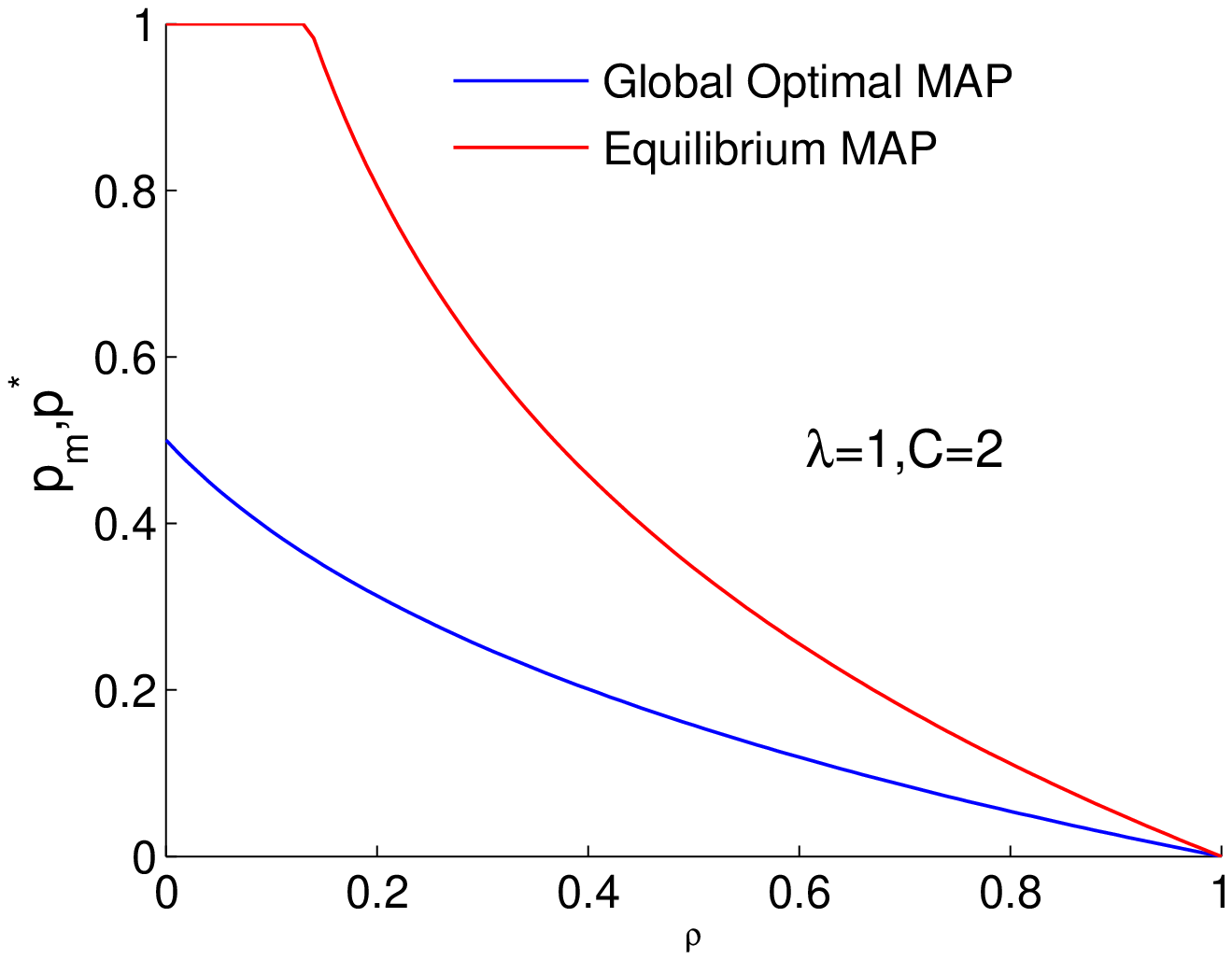}
   \caption{\small{Equilibrium and Optimal MAP}}
   \label{fig:PoAGoodputEqOptMAP}
 \end{minipage}
\end{figure*}


\begin{figure}
\centering
  \includegraphics[scale=.4]{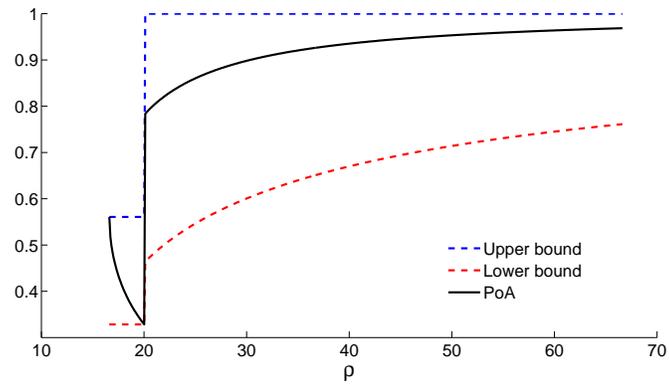}\\
  \caption{\small{PoA for delay, C=3, $\lambda$=1}}
   \label{fig:PoADelay}
\end{figure}

\end{document}